% ver 2, Nov 10, 2009

\documentclass{article}%
\usepackage{amsmath}
\usepackage{amstext}
\usepackage{amsfonts}
\usepackage{amssymb}
\usepackage{amsopn}
\usepackage{graphicx}
\usepackage{cite}
\usepackage{bm,bbm}
\usepackage{epsfig,epic}%
\setcounter{MaxMatrixCols}{30}
%TCIDATA{OutputFilter=latex2.dll}
%TCIDATA{Version=5.00.0.2606}
%TCIDATA{LastRevised=Friday, August 07, 2009 15:55:46}
%TCIDATA{<META NAME="GraphicsSave" CONTENT="32">}
%TCIDATA{<META NAME="SaveForMode" CONTENT="1">}
%TCIDATA{BibliographyScheme=Manual}
\newtheorem{theorem}{Theorem}
\newtheorem{lemma}[theorem]{Lemma}
\newtheorem{corollary}[theorem]{Corollary}
\newtheorem{proposition}[theorem]{Proposition}
\newenvironment{proof}[1][Proof]{\noindent\textbf{#1.} }{\ \rule{0.5em}{0.5em}}
\begin{document}

%\newpage

\vspace{1cm}

\begin{center}

{\Large Volume of the set of unistochastic matrices of order $3$ \newline and
the mean Jarlskog invariant}
\end{center}

\vspace{1cm}

{\large Charles Dunkl} \footnote{Email address: cfd5z@virginia.edu}
\newline
Department of Mathematics, University of Virginia, \newline Charlottesville,
VA 22904-4137, USA 
\\

\vspace{0.1cm}

{\large Karol \.Zyczkowski} \footnote{Email address: karol@tatry.if.uj.edu.pl}
\newline Institute of Physics, Jagiellonian University, Cracow \newline and
\newline Center for Theoretical Physics, Polish Academy of Sciences, Warsaw

\vspace{12mm}

November 10, 2009

\vspace{13mm}

\textbf{Abstract}

%\end{center}

\vspace{5mm}

\noindent A bistochastic matrix $B$ of size $N$ is called
\textsl{unistochastic}  if there exists a unitary $U$ such that $B_{ij}%
=|U_{ij}|^{2}$  for $i,j=1,\dots,N$.  The set $\mathcal{U}_{3}$ of all
unistochastic matrices  of order $N=3$ forms a proper subset of the Birkhoff
polytope,  which contains all bistochastic (doubly stochastic) matrices.  We
compute the volume of the set $\mathcal{U}_{3}$ with respect to  the flat
(Lebesgue) measure and analytically evaluate  the mean entropy of an
unistochastic matrix of this order.  We also analyze the Jarlskog invariant
$J$, defined for any  unitary matrix of order three,
%proportional  to the squared area of the unitarity triangle
and derive its probability distribution for the ensemble of  matrices
distributed with respect to the Haar measure on $U(3)$  and for the ensemble
which generates the flat measure  on the set of unistochastic matrices.  For
both measures the probability  of finding $|J|$ smaller than the value
observed for the CKM matrix,  which describes the violation of the CP parity,
is shown to be small.  Similar statistical reasoning may also be applied to
the MNS matrix, which plays role in describing the neutrino oscillations. Some
conjectures are made concerning analogous probability measures in the space of
unitary matrices in higher dimensions.

\newpage

\section{Introduction}

\vspace{5mm}

Bistochastic matrices appear in variety of problems in different branches of
science. A bistochastic matrix (also called doubly stochastic) contains real
non-negative entries, the sum of which in each column and in each row is equal
to unity. Thus each column and each row of such a matrix can be interpreted as
a probability vector. The structure of the set $\mathcal{B}_{N}$ of all
bistochastic matrices of order $N$ is well understood \cite{MO79}. It is
formed by the convex polytope of all permutation matrices of size $N$ and it
is often called the \textsl{Birkhoff polytope} \cite{Bi46}. Analytical
expressions for its volume for small dimensionality \cite{CR99,BP03}
%$N\le 10$
and for the leading terms of the volume in the asymptotic limit
\cite{CM07,CBSZ08} are available in the literature.

In various physical problems it is assumed that the probabilities, which form
the entries of a bistochastic matrix, arise as squared modulus of an element
of an (a priori unknown) unitary matrix. A bistochastic matrix $B$ which can
be generated from an unitary matrix $U$ by the relation $B_{ij}=|U_{ij}|^{2}$
is called \textsl{unistochastic} (or orthostochastic).

For instance, such matrices are used in high energy physics to characterize
interactions between elementary particles, which can be divided into $N$
generations. Since the Hamiltonians describing two kinds of physical
interactions (usually called 'strong' and 'weak') do not commute, these two
Hermitian operators determine a single unitary matrix of order $N$, which
relates both eigenbases. This is the famous unitary matrix of
Cabbibo-Kobayashi-Maskawa \cite{KM73}. Note that the squared moduli of the CKM
matrix $V_{\mathrm{CKM}}$ form the corresponding unistochastic matrix $B$, the
entries of which represent probabilities which are accessible experimentally
\cite{Ja85,JS88,Di05,Di08,GGPT08}.

According to the Standard Model of elementary particles there exist three
generations of quarks, thus the case of a direct physical importance are
unistochastic matrices of order $N=3$. On the other hand, the case $N=4$ could
also become relevant in case a fourth generation of quarks should be
discovered \cite{BD87,La89,AMM91}. A similar problem in the neutrino physics
is characterized by the Maki-Nakagawa-Sakata matrix (MNS matrix) \cite{MNS62},
some parameters of which are still quite uncertain. A relation between the MNS
and CKM matrices is studied in \cite{CM00}.

In practice, given a bistochastic matrix $B\in\mathcal{B}_{N}$ it is important
to know, whether it belongs to the set $\mathcal{U}_{N}$ of unistochastic
matrices. If this is the case one would like to describe the set of all
unitary matrices such that $B_{ij}=|U_{ij}|^{2}$ for $i,j=1,\dots, N$. These
questions are of a particular interest for research in foundations of quantum
mechanics and investigation of properties of transition
probabilities\cite{Lande,Rovelli,Lo97}, scattering theory \cite{Mennessier},
quantum counterparts of Markov processes and dynamics on graphs
\cite{Tanner,PTZ03}, and the theory of quantum information processing
\cite{We01}.

Any bistochastic matrix of order two is unistochastic, so both sets coincide,
$\mathcal{U}_{2}=\mathcal{B}_{2}$. The situation differs already for $N=3$. To
show this fact Schur considered the symmetric combination of cycle-three
permutation matrices, $P$ and $P^{-1}=P^{2}$. This matrix
\begin{equation}
B_{S} = \frac{1}{2} (P+P^{2}) = \frac{1}{2} \left[
\begin{array}
[c]{ccc}%
0 & 1 & 1\\
1 & 0 & 1\\
1 & 1 & 0
\end{array}
\right] \label{B3}%
\end{equation}
is clearly bistochastic but it is easy to see that there is no corresponding
unitary matrix. Hence $\mathcal{U}_{3} \subsetneq\mathcal{B}_{3}$ and a
similar relation holds for an arbitrary $N\ge3$. Vaguely speaking, the moduli
of a unitary matrix need to fulfill certain constraints, more stringent than
the obvious fact that the sum of squared moduli in each row (each column) is
equal to unity. As recently analyzed by P. Di{\c t}{\v a} \cite{Di06}, this
simple observation has important consequences for reconstruction of unitary
matrices from experimental data.

%%%

A general question, if a given bistochastic matrix $B$ is unistochastic,
remains open, and only partial results are available \cite{ZKSS03,BEKTZ05}.
Necessary and sufficient conditions for unistochasticity are known for $N=3$
\cite{Poon,JS88, Na96}, while the constraints for unistochasticity recently
obtained by Di{\c t}{\v a} \cite{Di06} for the general case of $N \ge4$ are
formulated implicitly and do not provide a constructive solution of the problem.

Geometrical properties of the set of $\mathcal{U}_{3}$ of unistochastic
matrices of order $3$ were studied in \cite{Poon,BEKTZ05}. This set contains a
$4$-dimensional unistochastic ball centered at the flat (van der Waerden)
matrix $W$, for which all entries are equal to $1/3$. The set $\mathcal{U}%
_{3}$ is not convex but it is star shaped. This means that if $B
\in\mathcal{U}_{3}$ then the entire interval $BW$ belongs to the set. The
volume of $\mathcal{U}_{3}$ with respect to the Euclidean (Lebesgue) measure
was numerically estimated by a Monte Carlo-type procedure \cite{BEKTZ05}.

The main aim of this work is to derive an analytical formula for the volume of
the set $\mathcal{U}_{3}$ of unistochastic matrices of order $3$ with respect
to the flat, Euclidean measure. We obtain also a compact expression allowing
one to average any function of elements of $B$ over the set $\mathcal{U}_{3}$.
In particular we derive an explicit result for the average generalized
entropies $S_{q}$ which in the special case $q=1$ gives the mean Shannon
entropy of the columns of unistochastic matrices. We compute also the average
Jarlskog invariant $J$ proportional to the area $A$ of the unitarity triangle,
which characterizes any unistochastic matrix \cite{Ja85, JS88}. Furthermore we
derive higher moments $\langle J^{k} \rangle$ and analyze the probability
distribution $P(J)$, to get more insight into properties of the CKM matrix,
which describes violation of the CP symmetry. Such an approach was recently
suggested by Gibbons et al. \cite{GGPT09}, who used other probability measures
for this purpose.

All averages are computed with respect to the natural Euclidean measure in the
set of unistochastic matrices and are compared with the averages with respect
to the measure induced by the Haar measure on $U(3)$. This measure leading to
the \textsl{unistochastic ensemble} \cite{ZKSS03}, called flag--manifold
measure in \cite{GGPT09}, is not uniform in the set $\mathcal{U}_{3}$.
Although approach presented, based on properties of intertwining operators
associated to the group $S_{3}$ \cite{D1,D2}, is directly applicable to the
case $N=3$, we conjecture also some properties of the measures in the sets of
unistochastic matrices in higher dimensions.

The paper is organized as follows. In the next two sections we present the
necessary definitions and review key properties of the Birkhoff polytope
$\mathcal{B}_{3}$ and its subset $\mathcal{U}_{3}$ containing unistochastic
matrices. In section 4 we define a family of measures in the set
$\mathcal{U}_{3}$ and compute its volume with respect to them. Similar study
of average entropies is presented is section 4 while the average Jarlskog
invariant and its distribution are investigated in section 5. The paper is
concluded with section 6, while some conjectures concerning the measures in
the set of unitary matrices for $N\ge4$ are relegated to the Appendix.

%%%%%%%%%%%%%%%%%%%%%%%  section 2         %%%%%%%%%%%%%%

\section{The Birkhoff polytope $\mathcal{B}_{3}$}

A real square matrix $B$ of order $N$ is called \textsl{bistochastic} (or
doubly stochastic)  if it satisfies the following conditions%

\begin{equation}
\mathrm{{i)}\ }B_{ij}\geq0\mathrm{\hspace{10mm}{ii)}\ }\sum_{i}B_{ij}%
=1\mathrm{\hspace{10mm}{iii)}\ }\sum_{j}B_{ij}=1\mathrm{\ .}\label{bist}%
\end{equation}

Such a matrix $B$ is often used to describe discrete dynamics, $p^{\prime}%
=Bp$, in the space of probability vectors. Condition i) implies that all
elements of the transformed vector $p^{\prime}$ are non-negative. Due to
condition ii) its $1$--norm $\sum_{i} p_{i}$, is preserved. A matrix
satisfying two first conditions is called \textsl{stochastic} and it sends the
simplex of $N$-point probability vectors into itself. Condition iii) implies
that additionally the transposed matrix $B^{T}$ is stochastic, which explains
the name.

The uniform probability vector $p_{*}$ with all components equal, $p_{i}=1/N$
stays clearly invariant with respect to any bistochastic matrix, $Bp_{*}%
=p_{*}$. Thus a bistochastic matrix describes a (weak) contraction of the
probability simplex towards the uniform distribution $p_{*}$.

Let $\mathcal{B}_{N}$ denote the set of all bistochastic matrices of order
$N$, called \textsl{Birkhoff polytope}. This convex polytope is well known in
linear programming. Since it arises in the problem of assigning $N$ workers to
$N$ tasks, given their efficiency ratings for each task,  it is sometimes
called the \textsl{assignment polytope}.

The Birkhoff polytope is equivalent to the convex hull of all $N!$ permutation
matrices of size $N$. Hence a permutation matrix $P$ forms an extremal point
of $\mathcal{B}_{N}$. All corners of $\mathcal{B}_{N}$ are equivalent in the
sense that a given corner can be obtained from another one by an orthogonal
transformation. A bistochastic matrix belongs to the boundary of the Birkhoff
polytope if and only if at least one of its entries is equal to zero.

There exists a unique bistochastic matrix $W$, with all entries equal, $W_{ij}
= \frac{1}{N}$. It is also called a matrix of van der Waerden, since it
saturates the van der Waerden inequality \cite{MO79} concerning the permanent
of bistochastic matrices, per$(B)\ge N! N^{-N}$. It is easy to see that $W$ is
located symmetrically at the center of the Birkhoff polytope.

A bistochastic matrix $B$ can be determined by its minor of size $(N-1)$.
Hence the dimensionality of the Birkhoff polytope $\mathcal{B}_{N}$ equals
$(N-1)^{2}$. For instance, the dimension of the set $\mathcal{B}_{2}$ is equal
to one, and this set forms indeed an interval between the identity matrix
${\mathbbm 1}_{2}$ and the $2$--element permutation matrix. In other words,
any bistochastic matrix of order two can be written as
\begin{equation}
B_{2} \left(  a\right)  = \left[
\begin{array}
[c]{cc}%
a & 1-a\\
1-a & a
\end{array}
\right]  \mathrm{\quad where \quad} a \in[0,1].\label{bist2}%
\end{equation}

The length of this interval, equivalent to volume of $\mathcal{B}_{2}$, is
equal to unity, if we consider it as a subset of ${\mathbbm R}^{1}$. However,
we are going to consider this set as an element of ${\mathbbm R}^{N^{2}}$ then
the distance between the points $(1,0,1,0)$ and $(0,1,0,1)$ is equal to $2$,
so in these units one has $\mathrm{vol}_{1}(\mathcal{B}_{2})=2$.

In this paper we are going to work with the case $N=3$, so the Birkhoff
polytope is defined as a convex hull of $3!=6$ permutation matrices,
\begin{equation}
P=\left[
\begin{array}
[c]{ccc}%
0 & 1 & 0\\
0 & 0 & 1\\
1 & 0 & 0
\end{array}
\right]  \ , \quad\quad P^{2}=P^{-1}=\left[
\begin{array}
[c]{ccc}%
0 & 0 & 1\\
1 & 0 & 0\\
0 & 1 & 0
\end{array}
\right]  \ , \quad\quad{\mathbbm 1}=\left[
\begin{array}
[c]{ccc}%
1 & 0 & 0\\
0 & 1 & 0\\
0 & 0 & 1
\end{array}
\right]  \ , \quad\quad\label{paa}%
\end{equation}

\begin{equation}
P_{12}=\left[
\begin{array}
[c]{ccc}%
0 & 1 & 0\\
1 & 0 & 0\\
0 & 0 & 1
\end{array}
\right]  \ , \quad\quad P_{13}=\left[
\begin{array}
[c]{ccc}%
0 & 0 & 1\\
0 & 1 & 0\\
1 & 0 & 0
\end{array}
\right]  \ , \quad\quad P_{23}=\left[
\begin{array}
[c]{ccc}%
1 & 0 & 0\\
0 & 0 & 1\\
0 & 1 & 0
\end{array}
\right]  \ .\label{pbb}%
\end{equation}

We divided the permutation matrices into two triples, which belong to two
totally orthogonal $2$--planes. A uniform mixture in any triple produces the
flat matrix $W$,
\begin{equation}
P + P^{2} + {\mathbbm 1} \ = \ W \ = \ P_{12}+P_{13} + P_{23}
\ .\label{triangles}%
\end{equation}
Working with the standard Hilbert-Schmidt distance, defined by $D^{2}%
(A,B):=\mathrm{Tr}(A-B)(A-B)^{*}$, we see that both triples form equilateral
triangles. To produce a sketch of them in $4$ dimensions we will use the
following parametrization
\begin{equation}
B({\vec b})=B(b_{1},b_{2},b_{3},b_{4}) \ := \ \left[
\begin{array}
[c]{ccc}%
b_{1} & b_{2} & 1-b_{1}-b_{2}\\
b_{3} & b_{4} & 1-b_{3}-b_{4}\\
1-b_{1}-b_{3} & 1-b_{2}-b_{4} & \sum_{i=1}^{4}b_{i}-1
\end{array}
\right]  .\label{Bbbb}%
\end{equation}

\begin{figure}[ptbh]
\centerline{ \hbox{
\epsfig{figure=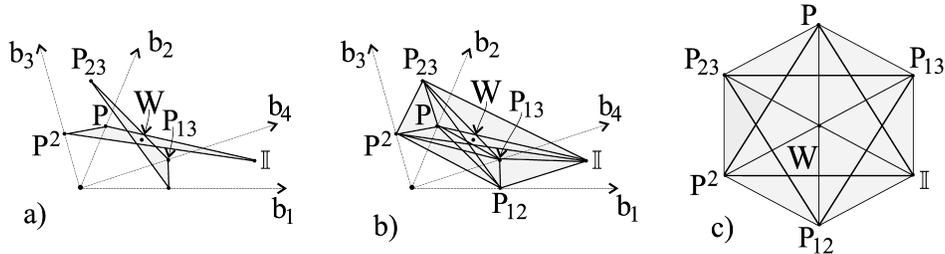,width=12.5cm}
}}  \caption{Birkhoff polytope for $N = 3$ plotted in parametrization
(\ref{Bbbb}), a) two orthogonal equilateral triangles centered at the flat van
der Waerden matrix $W$, b) all $15$ edges determining the polytope, c) the
same polytope as seen 'from above'.}%
\label{fig1}%
\end{figure}

Both triangles shown in Fig. \ref{fig1}a, cross at their center $W$. Six
permutation matrices form $15$ edges, out of which all belong to the boundary
of $\mathcal{B}_{3}$ and all are extremal. There are six long edges, of length
$\sqrt{6}$, which form two equilateral triangles, and nine short edges of
length $2$. If one plots all of them, as in Fig. \ref{fig1}b, the sketch of
$\mathcal{B}_{3}$ becomes complete, but not very illuminating. Another natural
possibility is to look at the polytope 'from above', the direction
distinguished by the vector ${\vec b}=(1/3,1/3,1/3,1/3)$. The Birkhoff
polytope then appears symmetrically as a regular hexagon with two inscribed
equilateral triangles forming the Star of David - see Fig. \ref{fig1}c. Note
that all the diagonals of the hexagon belong to the boundary of $\mathcal{B}%
_{3}$, and that the distance $PP_{12}$ of the diagonal is shorter than the
side $P{\mathbbm 1}$ of the equilateral triangle. Any $2$-d face of the
polytope is formed by an isosceles triangle with two short edges and one long.

The polytope $\mathcal{B}_{3}$ is defined as the convex hull of $6$ corners,
so one could think, it may be decomposed into two $4$-d simplices, each
determined by $5$ points. This would be possible, if we could select $4$
corners, which span the base of a simplex and then allow two other corners to
play the role of an apex for two simplices, with the same base. However, this
would require that the edge connecting both apexes is not extremal or it
includes one of the corner from the base of the simplex. Neither of these
holds for the Birkhoff polytope, so its decomposition into two simplices is
not possible.

To find a decomposition of $\mathcal{B}_{3}$ into three simplices take three
corners of one equilateral triangle, e.g. $\triangle( P,P^{2},{\mathbbm 1})$.
Out of the orthogonal triangle select a side, say the one formed by the
corners $P_{12}$ and $P_{13}$. These five corners define a $4$-d simplex. The
same construction performed for two other sides of the $\triangle
(P_{12},P_{13},P_{23})$ produces two other simplices. It is easy to show any
point of $\mathcal{B}_{3}$ belongs to one of these simplices and that the
$4$-d volume of any of their intersections is equal to zero. Such a
triangulation of the Birkhoff polytope allows to find that its volume  in
${\mathbb{R}}^{9}$ according to the Lebesgue measure is equal to $9/8$. A
detailed investigation of the geometry of the Birkhoff polytope is provided in
\cite{BG77}.

%%%%%%%%%%%%%%%%%%%%%%%  section 3      %%%%%%%%%%%%%%

\section{The set $\mathcal{U}_{3}$ of unistochastic matrices}

\label{sec:unist}

A certain class of bistochastic matrices can be generated from unitary
matrices. Let $U$ denote a unitary matrix. Unitarity condition, $UU^{*}%
={\mathbbm 1}$, implies that the matrix $B$ defined by
\begin{equation}
B=f(U), \mathrm{\quad so \quad that \quad} B_{ij} = |U_{ij}|^{2}\label{unist}%
\end{equation}
is bistochastic. Any bistochastic matrix $B\in\mathcal{B}_{N}$ for which there
exists unitary $U \in U(N)$ such that $B=f(U)$ is called
\textsl{unistochastic}. The set of all unistochastic matrices of size $N$ will
be denoted as $\mathcal{U}_{N}$.

Note that the multiplication of $U$ by any diagonal unitary matrices $D_{1}$
and $D_{2}$ changes the phases of entries of $U$, but does not modify the
corresponding bistochastic matrix. Hence we define an equivalence relation
\begin{equation}
U \approx U^{\prime}\ = \ D_{1} U D_{2}\label{equiv1}%
\end{equation}
and observe that $B=f(U)=f(U^{\prime})$.

If the unitary matrix $U$, appearing in (\ref{unist}) is orthogonal, the
corresponding bistochastic matrix $B$ is called \textsl{orthostochastic}%
.\footnote{In some papers this name is used  for unistochastic matrices as
well.} This is the case for any bistochastic matrix of size $2$, since writing
an orthogonal matrix $O =\left(
\begin{smallmatrix}
\cos\vartheta & \sin\vartheta\\
-\sin\vartheta & \cos\vartheta
\end{smallmatrix}
\right) $ and taking $\vartheta= \arccos a$ we see that $B=f(O)$ for any
bistochastic matrix of order two represented in the form (\ref{bist2})
Therefore any unistochastic matrix of order $2$ is also orthostochastic. This
is no longer the case for $N\ge3$, as it is explicitly shown later in this section.

Interestingly this simple mathematical observation has far reaching
consequences for physics. In the theory of elementary particles one defines a
discrete space--time symmetry called CP, which stands for \textsl{charge}
conjugation and \textsl{parity}. Such a symmetry requires that a physical
process in which all particles are replaced by their antiparticles is
equivalent to the mirror image of the original process.

If such a symmetry were obeyed the CKM matrix $V_{\mathrm{CKM}}$ would be
orthogonal, (and thus invariant with respect to the complex conjugation) or it
would be equivalent to an orthogonal matrix with respect to (\ref{equiv1}).
For $N=2$ this is the case for any unitary matrix from $U(2)$. As the CP
symmetry was discovered in 1964 to be violated in experiments on decay of
neutral mesons $K$, one could predict that the number $N$ of generations of
quarks in the theory, equal to the size of the CKM matrix, has to be greater
than two.

In her first paper on the CKM matrix \cite{Ja85} Jarlskog observed that for
any unitary matrix $U$ of size $3$ the number
\begin{equation}
J := \mathrm{Im} ( U_{11}U_{22} U^{*}_{12}U^{*}_{21} )\label{ja1}%
\end{equation}
is invariant with respect to multiplication of the matrix $U$ by diagonal
unitary matrices and permutations. This quantity, now called the
\textsl{Jarlskog invariant}, computed for the CKM matrix $V_{\mathrm{CKM}}$
can be considered as a measure of the violation of the CP symmetry.

Consider now an arbitrary bistochastic matrix $B$ of size $N$. To check if
this matrix is unistochastic we need to know whether there exists a unitary
matrix $U$ such that $f(U)=B$ according to eq. (\ref{unist}). The moduli of
the unitary matrix are determined by the square roots of the entries of the
bistochastic matrix, $|U_{ij}|=\sqrt{B_{ij}}$ and one needs to find a set of
phases $\phi_{ij}$ which guarantees unitarity, where $U_{ij}=\left\vert
U_{ij}\right\vert \exp\left(  \mathrm{i}\phi_{ij}\right)$.

Choose the two first columns of $U$, which we denote as $\left\vert
u_{1}\right\rangle $ and $\left\vert u_{2}\right\rangle $. Their orthogonality
relation, $\left\langle u_{1}|u_{2}\right\rangle =0$, implies that
\[
\sum_{j=1}^{N}U_{j1}U_{j2}^{\ast}=0 .
\]
Introducing the notation $L_{j}=\left\vert U_{j1}\right\vert \cdot\left\vert
U_{j2}\right\vert $ and $\theta_{j}=\phi_{j1}-\phi_{j2}$ we may rewrite this
relation as
 \begin{equation} 
 \sum_{j=1}^{N}L_{j}\exp\left(  \mathrm{i}\theta_{j}\right)  =0.
% \sum_{j=1}^N  L_j \exp(  \mathrm{i} \theta_i) \ = \ 0 .
\label{link1} 
\end{equation}

This form has a nice geometric interpretation: given a set of $N$ line
segments of lengths $L_{1},\dots L_{N}$ we need to find the phases $\theta
_{j}$ in such a way that the entire chain is closed. Obviously it cannot be
done unless the longest link is shorter or is equal to the sum of all other
links. We are free to change the order of summation in (\ref{link1}) and hence
to relabel the links in such a way that they are ordered non increasingly,
$L_{1} \ge L_{2} \ge\dots\ge L_{N}$. Then the chain links condition reads
\begin{equation}
L_{1} \ \le\ L_{2} + \dots+ L_{N} \ .\label{link2}%
\end{equation}
If it is satisfied the chain can be closed and forms a unitarity polygon.

This relation was imposed by the assumed orthogonality of the first two
columns of $U$, but analogous conditions should be fulfilled by all links
corresponding to any pair of columns of $U$. Similar conditions are due to the
orthogonality between any two rows of $U$. This implies the total number of
$N(N-1)$ constraints of the form (\ref{link2}), some of which can be dependent
\cite{PZK01,ZKSS03}. However, there is an example of a bistochastic matrix $B$
of order four, which satisfies all chain links condition for all pair of rows,
but not for all pair of columns \cite{Pa02,ZKSS03}, so in practice one has to
check rows and columns separately. Furthermore, for $N\ge4$ these conditions
for $B$ are only necessary but not sufficient to imply unistochasticity.

%%%%%%%%%%%%%% FIGURE 2   %%%%%%%%%
\begin{figure}[ptbh]
\centerline{ \hbox{
\epsfig{figure=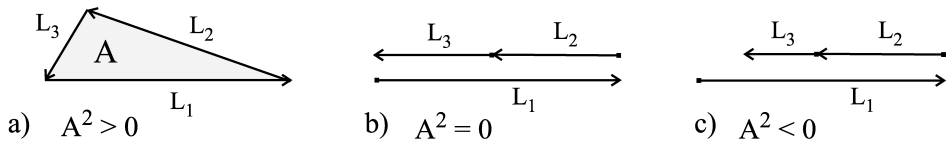,width=12.5cm}
}}  \caption{Chain link condition for unistochasticity: a) a unistochastic
matrix with a positive area of the unitarity traingle, $A^{2}>0$, b) limiting
case: an orthostochastic matrix with $A^{2}=0$, c) a bistochastic matrix $B$
not included into $\mathcal{U}_{3}$ for which $A^{2} < 0$.}%
\label{fig2}%
\end{figure}

It is comforting to realize that the situation gets simpler for $N=3$. In this
case the chain links relation for the first two columns reduces to the
triangle inequality,
\begin{equation}
|L_{2}-L_{3}| \ \le\ L_{1} \ \le\ L_{2} + L_{3} \ ,\label{link33}%
\end{equation}
and the first constraint is required if we relax the assumption that the links
are ordered decreasingly. Although in general one should check similar
relations stemming to other pairs of columns and rows of $U$, in the case
$N=3$ the last column by construction has the right moduli and does not impose
any further restrictions \cite{Na96}. Thus in this case
the relation allowing a chain to close is sufficient for unistochasticity  \cite{Poon,JS88},
and explicit formulae  for the phases $\phi_{ij}$ are provided below. 
If there is a bistochastic matrix $B$ such that in all relations (\ref{link33})
equality takes place, the phases $\theta_{j}$ are equal to zero or to $\pi$.
Thus the corresponding matrix $U$ is orthogonal, which means that $B$ is
orthostochastic. It is easy to show that a matrix $B$ belongs to the boundary
of the set $\mathcal{U}_{3}$ if and only if $B$ is orthostochastic. The set
$\mathcal{U}_{3}$ forms a $4$--dimensional subset of $\mathcal{B}_{3}$ of a
positive measure, while the set of orthostochastic matrices, at the boundary
of $\mathcal{U}_{3}$, is three dimensional \cite{Poon,BEKTZ05}.

For any given $B$ of order three it is straightforward to check whether link
conditions (\ref{link33}) are fulfilled, so that $B$ is unistochastic. For
instance nine short edges, (of length $2$) of the Birkhoff polytope belong to
$\mathcal{U}_{3}$, while the long edges (of length $\sqrt{6}$) do not belong
to this set.

Let us take three such edges spanned by $P,P^{2}$ and ${\mathbbm 1}$, which
form the equilateral triangle. At this plane, the set of orthostochastic
matrices, for which $L_{1}= L_{2} + L_{3}$, forms a \textsl{deltoid} -- see
Fig. \ref{fig3}a. This figure also called $3$--\textsl{hypocycloid}, may be
obtained by sliding a circle of radius $1/3$ inside the unit circle. Thus the
set of unistochastic matrices corresponds to the interior of the deltoid and
is not convex. This set contains the maximal unistochastic ball of radius
$r=\sqrt{2}/3$ centered at $W$, which touches the boundary at the deltoid.

Incidentally, the very same figure is related in a different way with the set
of unistochastic matrices of order three. The spectra of these matrices are
real or belong to the deltoid inscribed into the unit disk and stemming from
the real eigenvalue equal to unity \cite{ZKSS03}.

%%%%%%%%%%%%%% FIGURE 3   %%%%%%%%%
\begin{figure}[ptbh]
\centerline{ \hbox{
\epsfig{figure=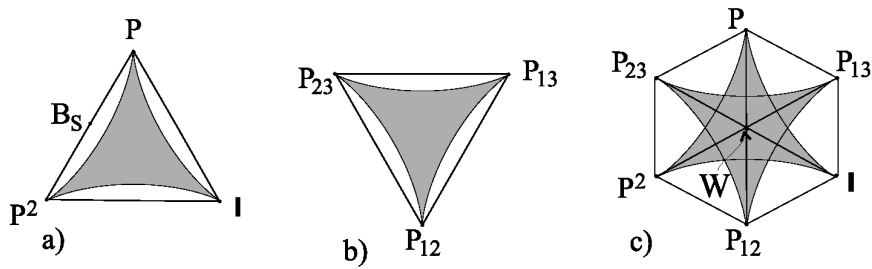,width=11.5cm}
}}  \caption{Non convex set $\mathcal{U}_{3}$ of unistochastic matrices forms
a proper subset $\mathcal{B}_{3}$. a) Deltoid obtained by the cross-section of
$\mathcal{U}_{3}$ along the plane spanned by the equilateral triangle
$\triangle(P,P^{2},{\mathbbm 1})$, b) a similar cross-section along totally
orthogonal plane, c) a view 'from above' as in \ref{fig2}c.}%
\label{fig3}%
\end{figure}

Consider a unistochastic matrix $B$ parametrized by (\ref{Bbbb}). The length
of the links read
\begin{equation}
L_{1} = \sqrt{b_{1}b_{2}}, \quad L_{2} = \sqrt{b_{3}b_{4}}, \quad L_{3} =
\sqrt{(1-b_{1}-b_{2})(1-b_{3}-b_{4})},\label{link123}%
\end{equation}
and the triangle inequality (\ref{link33}) provides the direct condition for
unistochasticity. Let us write down the area $A$ of this \textsl{unitarity
triangle} with sides $L_{1},L_{2}$, and $L_{3}$, and semiperimeter $p =
(L_{1}+L_{2}+L_{3})/2$. Making use of the Heron's formula
\begin{equation}
A \ = \ \sqrt{p(p-L_{1})(p-L_{2})(p-L_{3})} \ ,\label{area}%
\end{equation}
and substituting (\ref{link123}) we arrive with a compact expression for the
squared area $A^{2}$.

It will be convenient to work with this quantity multiplied by sixteen,
\begin{equation}
Q(b) := 4b_{1}b_{2}b_{3}b_{4}-\left(  b_{1}+b_{2}+b_{3}+b_{4}-1-b_{1}%
b_{4}-b_{2}b_{3}\right)  ^{2} = 16A^{2}.\label{area2}%
\end{equation}

Here $b=\{b_{1},b_{2},b_{3},b_{4}\}$ represents a vector in $\mathbb{R}^{4}$
which determines a bistochastic matrix in parametrization (\ref{Bbbb}). In
fact we can form six unitarity triangles in this way, depending on what pair
of columns or rows we wish to choose. Although their shapes differ due to
unitarity their area $A$ is the same \cite{JS88} so the quantity $A$ does not
change under permutation of the unitary matrix $U$.

It is easy to see that all chain--links conditions are equivalent to the
single condition for unistochasticity,
\begin{equation}
A^{2} (B) \ \geq\ 0 \ .\label{areacond}%
\end{equation}
In other words $B\left(  b\right)  \in\mathcal{U}_{3}$ if and only if
$b\in\Omega$ where
\begin{equation}
\Omega:=\left\{  b\in\mathbb{R}^{4}:b_{1}\geq0,\ b_{2}\geq0, \ b_{1}+b_{2}%
\leq1, \ Q\left(  b\right)  \geq0\right\}  .
\label{conduni3}%
\end{equation}

Also $\Omega$ is the closure of the connected component of $\left\{
b:Q\left(  b\right)  >0\right\}  $ which contains $\left(  \frac{1}{3}%
,\frac{1}{3},\frac{1}{3},\frac{1}{3}\right)  $ (see \cite[Sect. 2]{D1}).

We relate these expressions to one of the standard parametrizations of a
unitary matrix (see Di\c{t}\u{a} \cite[p. 11]{Di06}). For any $U^{\prime}\in
U\left(  3\right)  $ there are diagonal matrices $D_{1},D_{2}\in U\left(
3\right)  $ such that $D_{1}U^{\prime}D_{2}=$%
\begin{equation}
U=%
\begin{bmatrix}
c_{12} & s_{12}c_{13} & s_{12}s_{13}\\
s_{12}c_{23} & -c_{12}c_{13}c_{23}-e^{\mathrm{i}\delta}s_{13}s_{23} &
e^{\mathrm{i}\delta}c_{13}s_{23}-c_{12}c_{23}s_{13}\\
s_{12}s_{23} & e^{\mathrm{i}\delta}c_{23}s_{13}-c_{12}c_{13}s_{23} &
-c_{12}s_{13}s_{23}-e^{\mathrm{i}\delta}c_{13}c_{23}%
\end{bmatrix}
. \label{unitar3}%
\end{equation}

The parameters are the angles $\theta_{12},\theta_{13},\theta_{23},\delta$ and
$c_{jk}:=\cos\theta_{jk},s_{jk}:=\sin\theta_{jk},0\leq\theta_{jk}\leq\frac
{\pi}{2}$ for $1\leq j<k\leq3$. The function (\ref{unist}) determines thus a
unistochastic matrix $B\left(  b\right)  =f\left(  U\right)  $ and its entries read%

\begin{align}
b_{1} &  =c_{12}^{2},\ \ b_{2}=s_{12}^{2}c_{13}^{2},\ \ b_{3}=s_{12}^{2}c_{23}%
^{2},\label{unitar4}\\
b_{4} &  =c_{12}^{2}c_{13}^{2}c_{23}^{2}+s_{13}^{2}s_{23}^{2}+2c_{12}%
c_{13}c_{23}s_{13}s_{23}\cos\delta.\nonumber
\end{align}
We explain how these parameters are used to determine the phases $\phi_{ij}$
for $2\leq i,j\leq3$. We consider only the nondegenerate case in which all
entries of $U$ are nonzero. This implies $c_{jk}>0$ and $s_{jk}>0$ for $1\leq
j<k\leq3$. The equations (\ref{unitar4}) determine $\delta$ up to the sign (in
$-\pi<\delta<\pi$). The fact that $U$ and $\overline{U}:=\left[
\overline{U_{ij}}\right]  _{i,j=1}^{3}$ produce the same values for
$b_{1},\ldots,b_{4}$ causes this ambiguity. We will adopt the
normalization\ $\operatorname{Im}U_{22}>0$. This forces $\sin\delta
<0,\operatorname{Im}U_{32}<0,\operatorname{Im}U_{23}<0$ and $\operatorname{Im}%
U_{33}>0$. The cosines of the phases are computed using the entries in
(\ref{unitar3}). The phases are related to the (interior) angles of the
unitarity triangles derived from columns $1$ and $2$, and from columns $1$ and
$3$. For the former case, using the lengths from equation (\ref{link123}) and
denoting the angles $\theta_{1},\theta_{2},\theta_{3}$ by the label on the
opposite side, we find%
\begin{align*}
\cos\theta_{3} &  =\frac{L_{1}^{2}+L_{2}^{2}-L_{3}^{2}}{2L_{1}L_{2}}%
=\frac{b_{1}b_{2}+b_{3}b_{4}-\left(  1-b_{1}-b_{3}\right)  \left(
1-b_{2}-b_{4}\right)  }{2\sqrt{b_{1}b_{2}b_{3}b_{4}}}\\
&  =-\cos\phi_{22},\\
\cos\theta_{2} &  =\frac{L_{1}^{2}+L_{3}^{2}-L_{2}^{2}}{2L_{1}L_{3}}%
=\frac{b_{1}b_{2}+\left(  1-b_{1}-b_{3}\right)  \left(  1-b_{2}-b_{4}\right)
-b_{3}b_{4}}{2\sqrt{b_{1}b_{2}\left(  1-b_{1}-b_{3}\right)  \left(
1-b_{2}-b_{4}\right)  }}\\
&  =-\cos\phi_{32}.
\end{align*}
From the conditions $\sin\phi_{22}>0$ and $\sin\phi_{32}<0$ we obtain
$\phi_{22}=\pi-\theta_{3}$ and $\phi_{32}=\theta_{2}-\pi$ (note the interior
angles satisfy $0<\theta_{1},\theta_{2},\theta_{3}<\pi$ thus $0<\phi_{22}<\pi$
and $-\pi<\phi_{32}<0$). By interchanging columns $2$ and $3$ we find the
remaining nonzero phases (we use matrix entry notation from equation
(\ref{Bbbb}) for more concise statements):%
\begin{align*}
\cos\phi_{23} &  =-\frac{b_{1}B_{13}+b_{3}B_{23}-B_{31}B_{33}}{2\sqrt
{b_{1}b_{3}B_{13}B_{23}}},\ \ \sin\phi_{23}<0;\\
\cos\phi_{33} &  =-\frac{b_{1}B_{13}+B_{31}B_{33}-b_{3}B_{23}}{2\sqrt
{b_{1}B_{13}B_{31}B_{33}}},\ \ \sin\phi_{33}>0.
\end{align*}
For the example where each $b_{i}=\frac{1}{3}$ the unitarity triangle is
equilateral, each $\theta_{i}=\frac{\pi}{3}$ and the above equations give
$\phi_{22}=\frac{2\pi}{3}=\phi_{33}$ and $\phi_{32}=-\frac{2\pi}{3}=\phi_{23}$.

We return to the consideration of the Jarlskog invariant. A straightforward
computation yields
\begin{equation}
2\operatorname{Re}\left(  U_{11}U_{22}U_{12}^{\ast}U_{21}^{\ast}\right)
=1-b_{1}-b_{2}-b_{3}-b_{4}+b_{1}b_{4}+b_{2}b_{3}, \label{2rej}%
\end{equation}
and thus the square of the Jarlskog invariant (\ref{ja1}) reads
\begin{equation}
\bigl[ \operatorname{Im}\left(  U_{11}U_{22}U_{12}^{\ast}U_{21}^{\ast}\right)
\bigr]^{2} =b_{1}b_{2}b_{3}b_{4}- \bigl[ \operatorname{Re}\left(  U_{11}%
U_{22}U_{12}^{\ast} U_{21}^{\ast}\right)  \bigr]^{2}\label{conduni4}%
\end{equation}

Substituting expression (\ref{2rej}) into above equation and comparing the
outcome with (\ref{area2}) we see that
\begin{equation}
J^{2} \ = \ \frac{1}{4}Q(b) \ = \ 4 A^{2}\label{ja2}%
\end{equation}
Thus the squared Jarlskog invariant, proportional to the squared area of the
unitarity triangle, may also be defined as in (\ref{area2}) for an arbitrary
bistochastic matrix $B\in\mathcal{B}_{3}$. For simplicity we shall write
according to the context $J=J(U)$ or $J=J(B)=J \bigl( B(f(U) \bigr)$, as it
should not lead to misunderstanding.

The squared Jarlskog invariant, $J^{2}$, is equal to zero if and only if $B$
is orthostochastic, so there exists an orthogonal matrix $O$, such that
$B_{ij}=O_{ij}^{2}$.

Following Haagerup \cite{Ha96} we shall call two unitary matrices $U_{1}$ and
$U_{2}$  \textsl{equivalent}, written $U_{1} \sim U_{2}$, if there exist two
diagonal unitary matrices $D_{1}$ and $D_{2}$ and two permutation matrices
$P_{1}$ and $P_{2}$ such that
\begin{equation}
U_{2} \ = \ D_{1} P_{1} \cdot U_{1} \cdot P_{2} D_{2} \ .
\label{equival}%
\end{equation}
Observe that due to permutation matrices this relation is more general than
the relation (\ref{equiv1}).

Since multiplication by phases or permutations do not vary the area of the
unitarity triangle we see that the squared Jarlskog invariant of two
equivalent unitaries are equal, if $U_{1} \approx U_{2}$ then $J^{2}%
(U_{1})=J^{2}(U_{2})$. Going back to the set of unitary matrices $U(3)$ we see
that $J(U)=0$ if $U$ is orthogonal, or more generally, if $U$ is
\textsl{equivalent} to an orthogonal matrix, $U= \ D_{1} P_{1} \cdot O \cdot
P_{2} D_{2}$.

Thus $J^{2}(U)$ measures to what extend the matrix $U$ can be transformed into
an orthogonal matrix by means of enphasing and permutations. It is easy to see
that $J$ is maximal if the unitarity triangle is equilateral, $L_{1}%
=L_{2}=L_{3}=1/3$ so that $J^{2}_{\mathrm{max}}=1/108$
-- see Appendix \ref{sec:ext}.
This is the case for
the flat matrix $W$ of van der Waerden, which corresponds to the unitary
Fourier matrix
\begin{equation}
F_{3} = \frac{1}{\sqrt{3}} \left[
\begin{array}
[c]{ccc}%
1 & 1 & 1\\
1 & \exp( \mathrm{i} \cdot2\pi/3) & \exp( \mathrm{i} \cdot4\pi/3)\\
1 & \exp( \mathrm{i} \cdot4\pi/3) & \exp( \mathrm{i} \cdot2\pi/3)
\end{array}
\right]  \ ,\label{f3}%
\end{equation}
which is an example of a complex Hadamard matrix of order three.
Such a unitary matrix
$H$ of size $N$ is distinguished by an extra condition that all its complex
entries have the same modulus, 
$|H_{ij}|^{2}=1/N$ for $i,j=1,\dots N$ \cite{TZ06}.

Any complex Hadamard matrix $H$ of order three is known to be
equivalent to the Fourier matrix $F_3$ \cite{Ha96}, 
which implies that $J^{2}(H)=J^{2}_{\mathrm{max}}$.
From this perspective the set of Hadamard matrices is
maximally distant from the set of orthogonal matrices.
While complex Hadamard matrices correspond to the flat bistochastic matrix $W$, 
located at the center of the set of unistochastic matrices, 
the boundary of which is formed by the set of orthostochastic matrices.

On the other hand for any bistochastic matrix which is not unistochastic, the
quantity $J^{2}=Q/4$ defined by (\ref{area2}) is negative. Since $Q=16A^{2}$
one might say that in this case the area of the unitarity triangle is
imaginary, since the three segments $L_{i}$ cannot be closed to form a
triangle. Among all bistochastic matrices of size three the quantity $Q$ is
the smallest for the matrix $B_{S}$ of Schur (\ref{B3}) for which $Q=-\frac
{1}{16}$, see Appendix \ref{sec:ext}. Indeed, looking at Fig. \ref{fig3}a we
see that $B_{S}$ is such a point of the Birkhoff polytope $\mathcal{B}_{3}$,
for which the distance to the set $\mathcal{U}_{3}$ of unistochastic matrices,
represented by the gray deltoid, is maximal.

%%%%%%%%%%%%%%%%%%%%%%%%%%%%%%%%%%%%%%%%%%%%%%%%%%%%

\section{The volume of the set of unistochastic matrices of order three.}

\label{sec:vol}

Since unistochastic matrices of order three are used in various branches of
theoretical physics, several geometric properties of the set $\mathcal{U}_{3}$
of these matrices were studied in \cite{BEKTZ05}. In particular, in that work
the volume of this set was estimated numerically. In this section we shall
improve these findings by deriving an analytical formula for this volume. To
this end we need to introduce probability measures into the set of
unistochastic matrices. A first natural choice will be

\medskip\noindent a) the flat (Lebesgue) measure $\mu_{3/2}$ used before in
\cite{BEKTZ05}.

\noindent

The above notation is due to the fact that this measure belongs to a
one-parameter class of measures $\mu_{k}$, defined below. In general any
probability measure in the set $U(N)$ of unitary matrices induces by function
(\ref{unist}) a measure into the set $\mathcal{U}_{N}$. Thus we will
distinguish the case

\medskip\noindent b) the measure $\mu_{1}$ induced by the Haar measure on
$U(3)$.

\noindent

This measure leads to the \textsl{unistochastic ensemble} or random
unistochastic matrices \cite{ZKSS03}. Since it is related to the unitarily
invariant measure on the flag manifold $U(3)/[U(1)]^{3}$ it was called
\textsl{flag--manifold measure} in \cite{GGPT09}.

\medskip

Suppose now that $dm$ denotes the normalized Haar measure on $U\left(
3\right)  $, $db$ is Lebesgue measure on $\mathbb{R}^{4}$, and $g$  is a
continuous function on $\left[  0,1\right]  ^{4}\subset\mathbb{R}^{4}$. Due to
\cite[Theorem 2.1]{D1} one may relate the integrals over the space of unitary
matrices $U(3)$ and over the set $\Omega\in\mathbb{R}^{4}$, see eq.
(\ref{conduni3}), which determines the set of unistochastic matrices,%

\begin{equation}
\int_{U(3)} \! g \! \left(  \left\vert U_{11}\right\vert ^{2}, \left\vert
U_{12}\right\vert ^{2},\left\vert U_{21}\right\vert ^{2}, \left\vert
U_{22}\right\vert ^{2}\right)  dm =\frac{2}{\pi}\int_{\Omega} \! g \! \left(
b_{1},b_{2},b_{3},b_{4}\right)  Q\left(  b\right)  ^{-1/2}db.\label{integrals}%
\end{equation}
The function $g(b_{1},b_{2},b_{3},b_{4})=g(b)$ determines the function $g(B)$
defined on the entire set $\mathcal{U}_{3}$ of unistochastic matrices.

We introduce new coordinates $\left(  b_{1},s,t,r\right)  $ with
\begin{align}
b_{2}  &  =s\left(  1-b_{1}\right)  , \quad\quad\quad b_{3} =t\left(
1-b_{1}\right)  ,\nonumber\\
b_{4}  &  =\left(  1-s\right)  \left(  1-t\right)  +b_{1}st+2r \sqrt
{b_{1}st\left(  1-s\right)  \left(  1-t\right) } . \label{defb4}%
\end{align}
Note that $s=c_{23}^{2},t=c_{13}^{2},r=\cos\delta$ in terms of the matrix $U$.

Then $Q=4b_{1}\left(  1-b_{1}\right)  ^{2}s\left(  1-s\right)  t\left(
1-t\right)  \left(  1-r^{2}\right)  $, and $\Omega$ corresponds to%
\[
\left\{  \left(  b_{1},s,t,r\right)  :\left(  b_{1},s,t\right)  \in\left[
0,1\right]  ^{3},-1\leq r\leq1\right\}  .
\]

The Jacobian for the change-of-variables is
\[
\left\vert \frac{\partial\left(  b_{1},b_{2},b_{3},b_{4}\right)  }%
{\partial\left(  b_{1},s,t,r\right)  }\right\vert =2\left(  1-b_{1}\right)
^{2}  \sqrt{ b_{1}s\left(  1-s\right)  t\left(  1-t\right)  } .
\]

Making use of the expressions derived in \cite{D1} we may write an explicit
form for an integral of a continuous function $g$ with respect to the measure
$Q(b)^{k-3/2}$ for arbitrary $k > 1/2$. Hence it is natural to introduce a
one-parameter family of measures $\mu_{k}$ on $\mathcal{U}_{3}$ which satisfy
\begin{equation}
\int_{\mathcal{U}_{3}}g(B)d\mu_{k}=    \int_{\Omega}g\left(  b\right)
Q(b)^{k-3/2}db. 
\label{mukint}%
\end{equation}

Thus the case $k=3/2$ corresponds to the flat (Lebesgue) measure on
$\mathcal{U}_{3}$, while for $k=1$ this expression reduces to (\ref{integrals}%
), so $\mu_{1}$ represents the measure induced  by the Haar measure on $U(3)$.

Integral at the right hand side of (\ref{mukint}) can be rewritten as
\begin{gather}
\int_{\Omega}g\left(  b\right)  Q\left(  b\right)  ^{k-\frac{3}{2}}db=\int
_{0}^{1}\int_{0}^{1}\int_{0}^{1}\int_{-1}^{1}g\left(  b_{1},s\left(
1-b_{1}\right)  ,t\left(  1-b_{1}\right)  ,b_{4}\right) \label{intJ}\\
\times b_{1}^{k-1}\left(  1-b_{1}\right)  ^{2k-1}\left[  4s\left(  1-s\right)
t\left(  1-t\right)  \right]  ^{k-1}\left(  1-r^{2}\right)  ^{k-\frac{3}{2}%
}dr~ds~dt~db_{1},\nonumber
\end{gather}
where $b_{4}$ is given by (\ref{defb4}). Setting $g=1$ and using the standard
beta integrals we determine the normalization constant \cite[Prop. 3.2]{D1})%
\begin{equation}
h_{k}:=\int_{\Omega}Q\left(  b\right)  ^{k-\frac{3}{2}}db=\frac{\pi
\Gamma\left(  k\right)  ^{3}}{\left(  2k-1\right)  \Gamma\left(  3k\right)
},\quad k>\frac{1}{2}.\label{hk}%
\end{equation}

Suppose $n=1,2,3,\ldots$: then
\begin{align*}
h_{n}  &  =\frac{\pi\left(  \left(  n-1\right)  !\right)  ^{3}}{\left(
2n-1\right)  \left(  3n-1\right)  !},\\
h_{n+1/2}  &  =\frac{\pi^{2}}{2n}\frac{ \bigl[ \left(  \frac{1}{2}\right)
_{n}\bigr]^{2}} {\left(  n+\frac{1}{2}\right)  _{2n+1}},
\end{align*}
where $\left(  x\right)  _{n}:=\prod_{i=1}^{n}\left(  x+i-1\right)  $ stands
for the Pochhammer symbol.  In particular,%
\begin{equation}
h_{3/2}=\dfrac{\pi^{2}}{3\cdot5\cdot7}=\dfrac{\pi^{2}}{105} .\label{h32}%
\end{equation}
gives the volume of $\mathcal{U}_{3}$ considered as a subset of $\mathbb{R}%
^{4}$. This is multiplied by $9$ to produce the volume relative to
$\mathbb{R}^{9}$ \cite{CM07} -- see Appendix \ref{sec:ext}.

Thus the ratio of the $4$--dimensional volume of $\mathcal{U}_{3}$ in
$\mathbb{R}^{9}$ to the volume of all bistochastic matrices is
\begin{equation}
\frac{\mbox{vol}(\mathcal{U}_{3})} {\mbox{vol}(\mathcal{B}_{3})} =
9\times\frac{\pi^{2}}{105}/\left(  \frac{9}{8}\right)  =\frac{8\pi^{2}}%
{105}=0.751969...
\end{equation}
This is in agreement with the outcome of earlier numerical calculations
\cite[eqn. (24)]{BEKTZ05} which were based on roughly $10^{7}$ sample points
and yielded $0.7520\pm0.0005$.

For completeness let us add that the volume of the set $\mathcal{U}_{3}$ with
respect to the flag--manifold measure $\mu_{1}$ reads $h_{1}=\pi/2$.

%%%%%%%%%%%%%%%%%%%%%%

\section{Mean entropy of a $N=3$ unistochastic matrix.}

The Shannon entropy of an $N$-point probability vector $p=(p_{1},\dots p_{N})$
is defined by
\begin{equation}
S (p) = - \sum_{i=1}^{N} p_{i}\ln p_{i} \ .\label{entrdef0}%
\end{equation}
This quantity measures to what extent the vector is mixed and varies from $0$
for any pure vector $(1,0,\dots0)$ to $\ln N$ for the maximally mixed vector
$p_{*}=(1/N, \dots1/N)$. In an analogous way one defines the entropy of a
bistochastic matrix
\begin{equation}
S(B) = - \frac{1}{N}\sum_{i=1}^{N} \sum_{j=1}^{N} B_{ij}\ln{B_{ij}}
\ ,\label{entrdef}%
\end{equation}
equal to the average entropy of its rows (or columns). For any permutation
matrix $P$ this entropy is equal to zero while its maximum value $\ln{N}$ is
attained at the flat matrix $W$. The mean entropy of a bistochastic matrix was
considered by S{\l }omczy{\'n}ski \cite{Sl02} and later analyzed in
\cite{ZKSS03,GMP03}.

To derive an expression for the average entropy of a unistochastic matrix with
respect to any probability measure on $\mathcal{U}_{3}$ bi-invariant under the
symmetric group $\mathcal{S}_{3}$ we observe  that is equal to the expected
value of $-3B_{11}\ln B_{11}$. This can be computed with respect to the
probability measure $h_{k}^{-1}Q\left(  b\right)  ^{k-\frac{3}{2}}db$ on
$\Omega$.

Let us denote the mean entropy by $\left\langle S\right\rangle _{\Omega,k}$,
where parameter $k$ labels the measure defined in (\ref{mukint}). From
(\ref{intJ}) specialized to functions of $b_{1}$ we obtain
\begin{align}
\left\langle S\right\rangle _{\Omega,k}  &  =\frac{-3\Gamma\left(  3k\right) }
{\Gamma\left(  k\right)  \Gamma\left(  2k\right)  }\int_{0}^{1}b_{1}^{k}\ln
b_{1}\left(  1-b_{1}\right)  ^{2k-1}db_{1}\nonumber\\
&  =\psi\left(  3k+1\right)  -\psi\left(  k+1\right)  ,\label{entrpsi1}%
\end{align}
where the digamma function reads  $\psi\left(  x\right)  :=\Gamma^{\prime
}\left( x\right)  /\Gamma\left(  x\right)  ,x>0$.

The recurrence relation $\psi\left( x+1\right)  =\psi\left(  x\right)  + 1/x $
is used in the following: Suppose $n=1,2,3,\cdots$ then
\begin{equation}
\left\langle S\right\rangle _{\Omega,n} =\sum_{j=n+1}^{3n}\frac{1}{j},
\quad\quad\left\langle S\right\rangle _{\Omega,n+\frac{1}{2}} =\sum
_{j=n+1}^{3n+1}\frac{2}{2j+1}.
\end{equation}
In particular the average entropy for the flag-manifold measure $\mu_{1}$
gives $\left\langle S\right\rangle _{\Omega,1}=\frac{5}{6}=0.833...$, This
result coincides with the average entropy of random complex vectors
\cite{Jo90,BZ06} which form a unitary matrix.

The mean entropy with respect to the flat measure $\mu_{3/2}$ on
$\mathcal{U}_{3}$ reads  $\left\langle S\right\rangle _{\Omega,\frac{3}{2}%
}=2\left(  \frac{1}{5}+\frac{1}{7}+ \frac{1}{9}\right)  =\frac{286}%
{315}=0.90793651\ldots$. This quantity was approximated as $0.908$ in
\cite[eqn. (25)]{BEKTZ05}.

These data can be compared with the maximal entropy $S_{\mathrm{max}} = \ln{3}
\approx1.099$, characteristic of the flat matrix $W$ of van der Waerden.

For comparison let us now compute the mean entropy with respect to the
Lebesgue measure averaged over the entire set $\mathcal{B}_{3}$ of
bistochastic matrices.

Straightforward calculation allows us to integrate functions of $b_{1},b_{2}$
over $\mathcal{B}_{3}$ with respect to the flat measure $db:=\prod_{i=1}%
^{4}db_{i}$. In general, we consider an arbitrary function $f$, integrable on
the triangle with vertices $\left\{  \left(  0,0\right)  ,\left(  1,0\right)
,\left(  0,1\right)  \right\}  $.
The result is
\begin{equation}
\int_{\mathcal{B}_{3}}f\left(  b_{1},b_{2}\right)  db=\int_{0}^{1}db_{1}%
\int_{0}^{1-b_{1}}f\left(  b_{1},b_{2}\right)  \left(  b_{1}b_{2}+\left(
b_{1}+b_{2}\right)  \left(  1-b_{1}-b_{2}\right)  \right)  db_{2}%
.\label{formbb}%
\end{equation}
The corollary stated in the Appendix specializes this to $\int_{0}^{1}f\left(
b_{1}\right)  db$ and allows us to find the volume of the Birkhoff polytope.
Thus relative to $\mathbb{R}^{9}$ one has $\mathrm{vol}_{4}\left(
\mathcal{B}_{3}\right)  =9/8$.
%$\int_{\mathcal{B}_{3}}1db=\frac{1}{8}$.

Formula (\ref{formbb})  can also be used to compute the average entropy, equal
to the expected value of $(-3b_{1}\ln b_{1})$. The analytic result
\[
8\int_{\mathcal{B}_{3}}\left(  -3b_{1}\ln b_{1}\right)  db=\frac{53}{60}.
\]
agrees with the numerical estimation $0.883$  obtained earlier in
\cite{BEKTZ05}. Note that this number is smaller than the mean entropy
$\left\langle S\right\rangle _{\Omega,\frac{3}{2}}$  averaged over the set of
unistochastic matrices, since these bistochastic matrices which do not belong
to $\mathcal{U}_{3}$ are located close to the boundary of the Birkhoff
polytope and are characterized by small entropy.

%%%%%%%%%%%%%%
\medskip

Let us now turn to the generalized  entropy defined for any probability vector
$\left\{  p_{i}:1\leq i\leq N\right\} $
\begin{equation}
S_{q}:=\frac{1}{q-1}\sum_{i=1}^{N}\left(  p_{i}-p_{i}^{q}\right)
.\label{renyi}%
\end{equation}

The parameter $q\neq1$ is assumed to be non negative. In the limiting case the
generalized entropy converges to the standard (Shannon) entropy,
$\lim_{q\rightarrow1}S_{q}=-\sum_{i=1}^{N}p_{i}\ln p_{i}$.

Applying definition (\ref{renyi}) to a unistochastic matrix $B\in
\mathcal{U}_{3}$ similarly to the ordinary case we let
\[
S_{q}=\frac{1}{3\left(  q-1\right)  }\sum_{i=1}^{3}\sum_{j=1}^{3} \left(
B_{ij}-B_{ij}^{q}\right)  .
\]
We compute the expected value of this expression with respect to $h_{k}%
^{-1}Q\left(  b\right)  ^{k-\frac{3}{2}}db$ on $\Omega$ for $k>\frac{1}{2}$.

It is the same as the expected value of $\frac{3}{q-1}\left(  B_{11}%
-B_{11}^{q}\right)  $, indeed%
\begin{align*}
\left\langle S_{q}\right\rangle _{\Omega,k}  &  =\frac{3\Gamma\left(
3k\right)  }{\left(  q-1\right)  \Gamma\left(  k\right)  \Gamma\left(
2k\right)  }\int_{0}^{1}\left(  b_{1}-b_{1}^{q}\right)  b_{1}^{k-1}\left(
1-b_{1}\right)  ^{2k-1}db_{1}\\
&  =\frac{1}{q-1}\left(  1-\frac{3\Gamma\left(  k+q\right)  \Gamma\left(
3k\right)  }{\Gamma\left(  k\right)  \Gamma\left(  3k+q\right)  }\right)  .
\end{align*}
When $k$ is an integer number $n$ or a half-integer $k=n+\frac{1}{2}$ this
expression is a rational function of $q$ and can be expressed with help of the
Pochhammer symbol $\left(  x\right) _{n}$ defined above,
\begin{align}
\left\langle S_{q}\right\rangle _{\Omega,n}  &  = \frac{1}{q-1}\left(
1-\frac{\left(  3n\right)  !}{n!\left(  q+n\right)  _{2n}}\right)  .\\
\left\langle S_{q}\right\rangle _{\Omega,n+\frac{1}{2}}  &  =  \frac{1}
{q-1}\left(  1-\frac{3\left(  \frac{1}{2}+n\right)  _{2n+1}}{\left(
q+\frac{1}{2}+n\right)  _{2n+1}}\right) \label{rengen}%
\end{align}

In particular, taking $n=1$ we arrive at handy expressions for the mean
generealized entropies averaged over the Haar measure and flat measure
respectively, which allow for an explicit partial fraction expansion,

\begin{eqnarray} 
\left\langle S_{q}\right\rangle _{\Omega,1}  &  =  \dfrac{q+4}{\left(
q+1\right)  \left(  q+2\right)  } =\frac{3}{q+1}-\frac{2}{q+2},\\
\left\langle S_{q}\right\rangle _{\Omega,\frac{3}{2}}  &  \; = \;
\dfrac{2\left(  4q^{2}+34q+105\right)  }{\left(  2q+3\right)  \left(
2q+5\right)  \left(  2q+7\right) } = \dfrac{63}{4\left(  2q+3\right)  }%
-\dfrac{45}{2\left(  2q+5\right)  }+\dfrac{35}{4\left(  2q+7\right)  } .
\nonumber
\label{renmeas}
\end{eqnarray}

For completeness we provide also an expression for the generalized entropy
averaged over the set $\mathcal{B}_{3}$ with respect to the flat measure
obtained with help of Corollary \ref{col:int}
%eq. (\ref{bistint}),%
\begin{equation}
\left\langle S_{q}\right\rangle _{\mathcal{B}_{3}} =\frac{2}{q+1}+\frac
{4}{q+2}-\frac{9}{q+3}+\frac{4}{q+4}.\label{renbist}%
\end{equation}
%Of course the limit as $q\rightarrow 1$ is $\frac{53}{60}$.

These entropies characterize well the distribution of matrices generated by
these measures. In particular, a comparison of both expressions in
(\ref{renmeas})  shows that the Haar measure on $U(3)$ populates the region
close to the boundary of  $\mathcal{U}_{3}$ more densely then the vicinity of
the flat matrix $W$ around its center. Since the squared Jarlskog invariant
$J^{2}$ is by construction equal to zero at the boundary of $\mathcal{U}_{3}$,
we may expect that its mean value over the flat measure $\mu_{1}$ is smaller
than the average with respect to the Haar measure $\mu_{3/2}$. As shown in the
next section this is indeed the case.

%%%%%%%%%%%%%%%%%%%%%%%  section 5        %%%%%%%%%%%%%%
\section{Distribution of the Jarlskog invariant}

\label{sec:dist}

The value of the Jarlskog invariant and its square at a cross-section of the
set $\mathcal{U}_{3}$ of unistochastic matrices is shown in Fig. \ref{fig5}.
Recent papers of Gibbons et al. \cite{GGPT08,GGPT09} analyzed squared Jarlskog
invariant \cite{Ja85,JS88} averaged over several probability measures on the
set of unitary matrices. In particular these authors computed the expectation
value, $\langle J^{2}\rangle$, averaged over the 'flag manifold' measure
induced by the Haar measure on $U(3)$ and analyzed numerically the probability
distribution $P(|J|)$ with respect to this measure. In this section we proceed
one step further and derive an analytical formula for this probability distribution.

%%%%%%%%%%%%%% FIGURE 5   %%%%%%%%%
\begin{figure}[ptbh]
\centerline{ \hbox{
\epsfig{figure=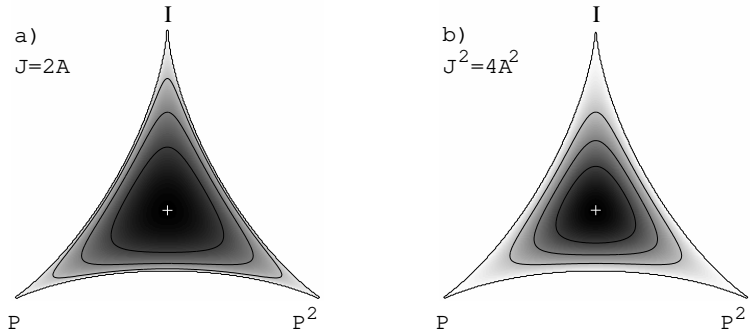,width=10cm}
}}  \caption{The absolute value of the Jarlskog invariant $|J|$ a), and its
square $J^{2}$ b), at the cross-section of $\mathcal{U}_{3}$ along the plane
formed by two permutation matrices and the identity. Dark color denotes high
values of $|J|$ and $J^{2}$. The maximum is achieved at the van der Waerden
matrix $W$ located the center of the deltoid. Note that outside the deltoid
$J^{2} <0$ and the bistochastic matrix is not unistochastic.}%
\label{fig5}%
\end{figure}

We shall start computing the moments of the distribution of the variable
$Q=4J^{2}$ defined in (\ref{area2}) as a function of a random unistochastic
matrix $B$. This task is rather simple, since we can express the moments of
$Q$ with respect to any measure $\mu_{k}$ by the coefficients $h_{k}$ defined
in (\ref{hk}),
\begin{align}
\langle Q^{n}\rangle_{k} =\frac{h_{k+n}}{h_{k}}  &  =3^{-3n}\frac{\left(
k-\frac{1}{2}\right)  \Gamma\left(  k+\frac{1}{3}\right)  \Gamma\left(
k+\frac{2}{3}\right)  \Gamma\left(  k+n\right) ^{2}}{\left(  k+n-\frac{1}%
{2}\right)  \Gamma\left(  k+n+\frac{1}{3}\right)  \Gamma\left(  k+n+\frac
{2}{3}\right)  \Gamma\left(  k\right)  ^{2}}\nonumber\\
&  =3^{-3n}\frac{\left(  k-\frac{1}{2}\right)  \left(  k\right)  _{n}^{2}%
}{\left(  k+n-\frac{1}{2}\right)  \left(  k+\frac{1}{3}\right)  _{n}\left(
k+\frac{2}{3}\right)  _{n}},\label{momq}%
\end{align}
where $k$ determines the measure (\ref{mukint}) while $n=0,1,2,\ldots$.

Setting $k=1$ and $n=1$ we find that the mean squared Jarlskog invariant,
averaged over the Haar measure reads  $\langle J^{2}\rangle_{1}=\langle
Q/4\rangle_{1}=1/720 = 1.389 \times10^{-3}$  in consistence with \cite[eqn.
(75)]{GGPT09}. For comparison note that the average over the flat measure
yields a larger value, $\langle J^{2}\rangle_{3/2}=3/1144=2.622 \times10^{-3}%
$. In general the flat measure favors larger values of $|J|$ as it is shown in
Fig.\ref{fig6}.

Having at our disposal the complete set of the moments of $Q$ we will
determine the exact distribution function $P(Q)$  in terms of hypergeometric
and related functions. To avoid nuisance factors in the calculations we will
consider the random variable $X:=27Q=108 J^{2}$ so that  $X$ takes values in
$\left[  0,1\right] $.

We know that the following relations hold for any $k> 1/2$ and $n=0,1,2,\cdots
$
\begin{align*}
\left(  k-\frac{1}{2}\right)  \int_{0}^{1}x_{0}^{n}x_{0}^{k-\frac{3}{2}%
}dx_{0}  &  =\frac{\left(  k-\frac{1}{2}\right)  }{\left(  k+n-\frac{1}%
{2}\right)  },\\
\frac{\Gamma\left(  \alpha+\beta\right)  }{\Gamma\left(  \alpha\right)
\Gamma\left(  \beta\right)  }\int_{0}^{1}x^{n+\alpha-1}\left(  1-x\right)
^{\beta-1}dx  &  =\frac{\left(  \alpha\right)  _{n}}{\left(  \alpha
+\beta\right)  _{n}},\left(  \alpha,\beta>0\right)  .
\end{align*}

In view of the expression (\ref{momq}) for the moments the above relations
allow us to find an alternative representation of the desired probability
distribution $P(X)$.

Let $X_{0},X_{1},X_{2}$ be independent random variables with the densities%
\begin{eqnarray}
\label{xx0}
f_{0}\left(  x\right)   &  = &\left(  k-\frac{1}{2}\right)  x^{k-\frac{3}{2}},\\
\label{xx1}
f_{1}\left(  x\right)   &  = &\frac{\Gamma\left(  k+\frac{1}{3}\right)  }%
{\Gamma\left(  k\right)  \Gamma\left(  \frac{1}{3}\right)  }x^{k-1}\left(
1-x\right)  ^{-2/3},\\
f_{2}\left(  x\right)   &  = & \frac{\Gamma\left(  k+\frac{2}{3}\right)  }%
{\Gamma\left(  k\right)  \Gamma\left(  \frac{2}{3}\right)  }x^{k-1}\left(
1-x\right)  ^{-1/3},
\label{xx2}
\end{eqnarray}
respectively, each being defined on $0\leq x\leq1$.

Then $X$ has the same moments and the same probability distribution as the
product $X_{0}X_{1}X_{2}$. This step, justified in Appendix \ref{sec:jarprob}
enables us to arrive at the key result of this section: an explicit expression
for the probability distribution for $X=108 J^{2}$, where $J$ denotes the
Jarlskog invariant of a random unistochastic matrix generated according to the
measure $\mu_{k}$,
\begin{equation}
P_{k}\left(  X \right)  =c_{k}\left(  k-\frac{1}{2}\right)  X^{k-3/2}\int_{X}
^{1}~_{2}F_{1}\left(  \frac{1}{3},\frac{2}{3};1;1-t\right)  t^{-1/2}%
dt.\label{int2F1}%
\end{equation}
It is assumed here that $0< X \leq1$ and $k> 1/2$, the symbol $_{2}F_{1}$
stands for the hypergeometric function, while the normalization constant
reads
\begin{equation}
c_{k}:=\dfrac{\Gamma\left(  k+\frac{1}{3}\right)  \Gamma\left(  k+\frac{2}%
{3}\right)  }{\Gamma\left(  k\right)  ^{2}} ,\label{ck}%
\end{equation}

In Appendix \ref{sec:jarprob} we made use of this expression to determine an
explicit expansion for the density function $P_{k}(X)$. These results allowed
us to show the distributions $P_{1}(X)$ and $P_{3/2}(X)$ in Fig. \ref{fig6}.
Observe that in the case $k=1$ we obtain an expression for the distribution of
$\left\vert J\right\vert $, considered as a function of a random unitary
matrix $U$ distributed with respect to the Haar measure on $U(3)$ random
variable on For small values of $|J|$ this distribution behaves as $P(|J|)
\sim\alpha|J|^{\lambda}$ with $\lambda=0$ and $\alpha= 8 \pi\approx25.133$.

%%%%%%%%%%%%%% FIGURE 6   %%%%%%%%%
\begin{figure}[ptbh]
\centerline{ \hbox{
\epsfig{figure=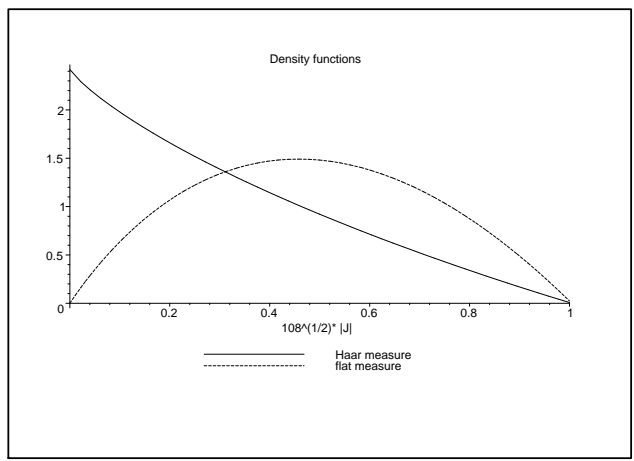,width=6cm,angle=-90}
}}  \caption{Probability distribution $P_{k}(|J|)$ of the Jarlskog invariant
for random unistochastic matrices generated according to the Haar measure
$\mu_{1}$ and the flat measure $\mu_{3/2}$.}%
\label{fig6}%
\end{figure}

To make a direct connection with the results of \cite{GGPT09}. we reproduce
here the formula for the integrated probability distribution. For any
$0<y\leq\frac{1}{6\sqrt{3}}$ the probability of finding a unitary matrix $U$
distributed according to the Haar measure on $U(3)$ with $|J(B)|$ less or
equal $y$ reads,
\begin{equation}
P_{1}\left\{  \left\vert J\right\vert \leq y\right\}  = 8\pi y+\left\{
24\ln\left(  4y^{2}\right)  \left(  y^{2}+4y^{4}+96y^{6}+\cdots\right)
-72y^{2}+128y^{4}+\frac{24384}{5}y^{6}+\cdots\right\}  .\label{intmu1}%
\end{equation}

In Appendix \ref{sec:jarprob} we derive this formula obtained for the measure
$\mu_{1}$ on $\mathcal{U}_{3}$ as well as an analogous result for the flat
measure $\mu_{3/2}$
\begin{equation}
P_{3/2}\left\{  \left\vert J\right\vert \leq y\right\}  =420y^{2}-\frac
{4480}{\pi}y^{3}\cdots+\frac{1680}{\pi}y^{3}\ln\left(  4y^{2}\right)
+\cdots\label{intmu3}%
\end{equation}
Both cumulative distribution functions are compared in Fig. \ref{fig7}.

%%%%%%%%%%%%%% FIGURE 7   %%%%%%%%%
\begin{figure}[ptbh]
\centerline{ \hbox{
\epsfig{figure=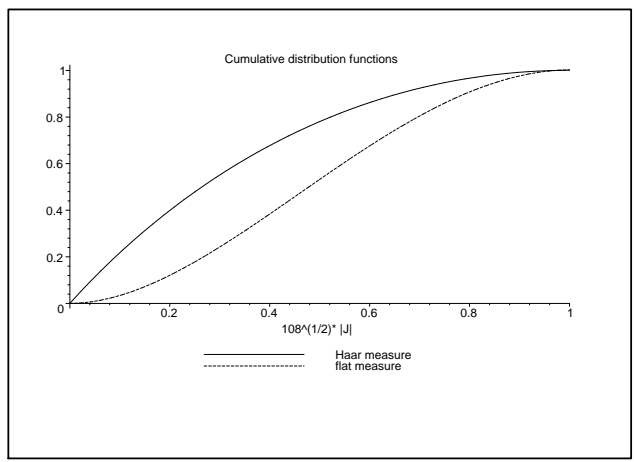,width=6cm,angle=-90}
}}  \caption{Cumulative probability distribution for the functions plotted in
Fig. \ref{fig6}.}%
\label{fig7}%
\end{figure}

Recent experimental data show that the observed value of the Jarlskog
invariant reads \cite{CKMf}
\begin{equation}
J_{\mathrm{obs}} \ = \ 3.08\; [+0.16, -0.18]\; \times10^{-5}\label{Jobser}%
\end{equation}
This concrete number can be now compared with the probability distribution
(\ref{intmu1}).  A $95\%$ confidence interval for $|J|$ is about
$[0.00202,\;1/6\sqrt{3}\approx0.096]$,  while the probability of getting a
value of $|J|$ outside this interval is $5\%$.

Moreover, we get an explicit estimate for the probability of obtaining at
random a unitary matrix,
%generated with respect to the Haar measure on $U(3)$,
such that the absolute values of its Jarlskog invariant is smaller then the
observed value,
\begin{equation}
P\left\{  \left\vert J\right\vert \leq J_{\mathrm{obs}}\right\}  =7.74\left[
+0.41,-0.45\right]  \times10^{-4}.\label{prob1}%
\end{equation}

The statistical hypothesis that the CKM matrix arises from the probability
experiment of producing a random unitary matrix, with respect to Haar measure
on $U\left(  3\right)  $, is rejected at the descriptive significance level of
$0.08\%$.

Another benchmark introduced in Gibbons et al \cite{GGPT09},
$P_{\mathrm{flag}}(|J| \le10^{-4}) \approx2.5085 \times10^{-3}$ is consistent
with numerical data obtained in Eq.(92) of that paper.
%\cite[eqn. (92)]{GGPT09}.
Note that Gibbons et al \cite{GGPT09} constructed several probability models
for which values as small or smaller than $J_{\mathrm{obs}}$ are more likely.

For comparison we note that the flat measure $\mu_{3/2}$ in the set of
unistochastic matrices yields a smaller probability. Using this measure
($k=\frac{3}{2}$) we obtain $P_{3/2}\left\{  \left\vert J\right\vert \leq
J_{\mathrm{obs}}\right\}  \approx3.98\times10^{-7}$. Indeed this could be
viewed as statistical evidence that the transition probabilities in the CKM
matrix do arise from a unitary matrix. Specifically the so-called likelihood
ratio test applied to the two probability densities for $\left\vert
J\right\vert $ induced by $d\mu_{1}$ and $\frac{8\pi^{2}}{105}d\mu_{3/2}$ (the
factor $\frac{8\pi^{2}}{105}$ comes from $P\left\{  Q<0\right\}  =1-\frac
{8\pi^{2}}{105}$ for the flat measure on $\mathcal{B}_{3}$) at $\left\vert
J\right\vert =3.08\times10^{-5}$ results in a factor of about $1200.$ This
value is obtained from the formulas for $f_{0}\left(  x\right)  $ in Appendix
\ref{sec:jarprob}.

Let us express the Jarlskog invariant (\ref{ja1}) by the standard parameters
(\ref{unitar3}) of a unitary matrix of order three,
\begin{equation}
J(U)=-c_{12}c_{23}c_{13}s_{12}^{2}s_{23}s_{13}\sin\delta\label{ja3}%
\end{equation}
Observed value of the Jarlskog invariant for the CKM matrix does not imply
that the CP violating phase $\delta_{CKM}$ had to be small. In fact
$\delta_{CKM}\in\lbrack62^{\circ},100^{\circ}]$ so even the value $\pi/2$ is
not ruled out -- see e.g. \cite{KN09}. Hence the small value (\ref{Jobser}) is
due to the angles $\theta_{ij}$ in (\ref{unitar3}), which determine the
bistochastic matrix.

Thus we are going to conclude this section with a simple statistical
statement: The CKM matrix should not be considered as a generic unitary matrix
drawn at random with respect to the Haar measure on $U(3)$. Furthermore, the
matrix $B_{CKM}$ of squared entries of $V_{CKM}$ is rather unlikely to be an
ordinary bistochastic matrix generated at random with respect to the flat
measure in this set.

%%%%%%%%%%%%%%%%%%%%%%%  section 6        %%%%%%%%%%%%%%

\section{Concluding Remarks}

In this work we have analyzed the Birkhoff polytope $\mathcal{B}_{3}$ of $N=3$
bistochastic matrices and its subset $\mathcal{U}_{3}$ of unistochastic
matrices. This set contains these bistochastic matrices which arise from
squared moduli of entries of a unitary matrix. We have improved the result of
\cite{BEKTZ05} by computing the exact volume of $\mathcal{U}_{3}$ with respect
to the flat (Lebesgue) measure and found that it takes more than three
quarters of the volume of the Birkhoff polytope $\mathcal{B}_{3}$.

We have introduced a one-parameter family of probability measures $\mu_{k}$
into the set $\mathcal{U}_{3}$ of unistochastic matrices. Among the measures
(\ref{mukint}) the distinguished ones are the uniform (flat) measure
$\mu_{3/2}$ and the measure $\mu_{1}$, induced by the Haar measure on $U(3)$.
Furthermore, the measure $\mu_{k}$ obtained in the limit $k\rightarrow\frac
{1}{2}$ coincides with the measure induced by the Haar measure on the
orthogonal group $O\left( 3\right) $.

We derived explicit formulae which allow us to compute expectation values of a
smooth function of an entry of $B$ with respect to these measures. In this way
we derived exact expressions for the mean entropy and the generalized entropy
of a random unistochastic matrix with respect to the measures $\mu_{k}$. These
values can serve as a reference values in studying properties of concrete
unitary and bistochastic matrices of order three, used in the theory of
quantum information.

In high energy physics and the theory of CP symmetry breaking one works with
unitary matrices of order $3$ and characterizes them by the Jarlskog invariant
(\ref{ja1}). Computing all the moments of the squared Jarlskog invariant
$J^{2}$ with respect to the measure $\mu_{k}$ we could represent the
probability distribution $P_{k}(J)$ as an integral (\ref{int2F1}) of the
hypergeometric function $_{2}F_{1}$. Expanding this function in a series and
integrating it term by term we arrived at an explicit representation of the
desired probability distributions. In particular, working with the Haar
measure $\mu_{1}$ we could derive analytical results on the distribution
$P_{1}(|J|)$ consistent with the numerical results earlier obtained in Gibbons
et al. \cite{GGPT09}. Our results support then the observation, that the
unitary CKM matrix $V_{CKM}$,  which describes the violation of the CP
symmetry, should not be regarded as a generic unitary matrix of order $3$.

\vspace{5mm}

\textbf{Acknowledgements:}

\vspace{5mm}

\noindent

It is a pleasure to thank I. Bengtsson and to W. Tadej for numerous
stimulating discussions and helpful correspondence. We acknowledge financial
support by the the special grant number DFG-SFB/38/2007 of Polish Ministry of
Science and Higher Education and an European Research project COCOS (K{\.Z}).

\appendix

\section{Extreme values of the parameter $Q$}

\label{sec:ext}

Consider a bistochastic matrix $M$ of order three parametrized by (\ref{Bbbb})
and the function $Q(M)$ defined in (\ref{area2}). The aim of this section is
to show that for any $M\in\mathcal{B}_{3}$ this function takes values in
$[-1/16, 1/27]$. Note that if the matrix is unistochastic, $M \in
\mathcal{U}_{3}$ than $Q$ is non negative and is proportional to the squared
area of the unitarity triangle.

It is straightforward to show that $Q$ is invariant under transposition and
permutation of rows or columns (for example replacing $\left(  b_{2}%
,b_{4}\right)  $ by $\left(  1-b_{1}-b_{2},1-b_{3}-b_{4}\right)  $). Let us
first introduce parameters $b_{1},s,t,x$ with $0\leq b_{1},s,t\leq1$ and
conditions on $x$ to be determined. Motivated by the unistochastic situation
let%
\begin{align*}
b_{2}  &  =\left(  1-b_{1}\right)  s,\\
b_{3}  &  =\left(  1-b_{1}\right)  t,\\
b_{4}  &  =\left(  1-s\right)  \left(  1-t\right)  +b_{1}st+x.
\end{align*}
Four more conditions must be satisfied for $M\left(  b\right)  $ to be
bistochastic (the inequalities $M\left(  b\right)  _{13}\geq0$ and $M\left(
b\right)  _{31}\geq0$ are already satisfied). The simultaneous inequalities
$M\left(  b\right)  _{23}\geq0$ and $M\left(  b\right)  _{32}\geq0$ are
equivalent to
\begin{align}
x  &  \leq\min\left(  u_{1},u_{2}\right)  ,\label{ineq1}\\
u_{1}  &  :=s\left(  1-t\right)  +b_{1}t\left(  1-s\right) \nonumber\\
u_{2}  &  :=t\left(  1-s\right)  +b_{1}s\left(  1-t\right) \nonumber
\end{align}
and $\left\{  M\left(  b\right)  _{22}\geq0,M\left(  b\right)  _{33}%
\geq0\right\}  $ is equivalent to
\begin{align}
-x  &  \leq\min\left(  \ell_{1},\ell_{2}\right)  ,\label{ineq2}\\
\ell_{1}  &  :=\left(  1-s\right)  \left(  1-t\right)  +b_{1}st,\nonumber\\
\ell_{2}  &  :=st+b_{1}\left(  1-s\right)  \left(  1-t\right)  .\nonumber
\end{align}
With these parameters%
\[
Q\left(  b\right)  =Q^{\prime}\left(  b_{1},s,t,x\right)  :=-\left(
1-b_{1}\right)  ^{2}\left(  x^{2}-4b_{1}st\left(  1-s\right)  \left(
1-t\right)  \right)  .
\]
For fixed $b_{1},s,t$ this is decreasing in $x^{2}$; thus the maximum value
occurs at $x=0$ and then maximizing over $b_{1},s,t$ we obtain $Q\left(
b\right)  =\frac{1}{27}$ when $b_{1}=\frac{1}{3},s=\frac{1}{2}=t$ (that is,
$M\left(  b\right)  _{ij}=\frac{1}{3}$ for $1\leq i,j\leq3$).

Next we show that the minimum value of $Q$ on $\mathcal{B}_{3}$ is $-\frac
{1}{16}$, achieved at the Schur matrix (1): by permutations of rows or
columns, and then transposition of $M$, if necessary, we may assume $0\leq
t\leq s\leq\frac{1}{2}$. In this triangle the bounds $-\ell_{2}\leq x\leq
u_{2}$ apply. Rather than considering $\min\left(  Q^{\prime}\left(
b_{1},s,t,-\ell_{2}\right)  ,Q^{\prime}\left(  b_{1},s,t,u_{2}\right)
\right)  $ in this region we will minimize $Q^{\prime}\left(  b_{1}%
,s,t,-\ell_{2}\right)  $ in the triangle bounded by $t=0,t=s,t=1-s$ (vertices
$\left(  0,0\right)  ,\left(  \frac{1}{2},\frac{1}{2}\right)  ,\left(
1,0\right)  $); this works because $=Q^{\prime}\left(  b_{1},s,t,u_{2}\left(
b_{1},s,t\right)  \right)  =Q^{\prime}\left(  b_{1},1-s,t,-\ell_{2}\left(
b_{1},1-s,t\right)  \right)  $ (writing $\ell_{2},u_{2}$ as functions of
$\left(  b_{1},s,t\right)  $). Then%
\[
Q^{\prime}\left(  b_{1},1-s,t,-\ell_{2}\left(  b_{1},1-s,t\right)  \right)
=-\left(  1-b_{1}\right)  ^{2}\left(  b_{1}\left(  1-s\right)  \left(
1-t\right)  -st\right)  ^{2}.
\]
As a function of $\left(  s,t\right)  $, $Q^{\prime}$ can not have an interior
minimum, so it suffices to check the edges of the triangle. On the edge $t=0$
we have $Q^{\prime}=-\left(  1-b_{1}\right)  ^{2}b_{1}^{2}\left(  1-s\right)
^{2}$ with minimum value of $-\frac{1}{16}$ at $b_{1}=\frac{1}{2},s=0$.

On the edge $t=1-s$ we obtain $Q^{\prime}=-\left(  1-b_{1}\right)  ^{4}%
s^{2}\left(  1-s\right)  ^{2}$ with minimum value of $-\frac{1}{16}$ at
$b_{1}=0,s=\frac{1}{2}=t$ (the Schur matrix for $x=-\frac{1}{4}$).

On the edge $s=t,Q^{\prime}=-\left(  1-b_{1}\right)  ^{2}\left(  b_{1}\left(
1-s\right)  ^{2}-s^{2}\right)  ^{2}$. This function has no interior minimum on
the interval $0\leq s\leq\frac{1}{2}$. The endpoints $s=0$ and $s=\frac{1}{2}$
have already been considered.

There is a neat formula for the integral of functions of$\left(  b_{1}%
,b_{2}\right)  $ over $\mathcal{B}_{3}$ with respect to the flat measure
$db:=\prod_{i=1}^{4}db_{i}$. The derivation of the formula involves adding
over the four regions formed in the unit square by the lines $s=t,s+t=1$.

\begin{proposition}
Let $f$ be integrable on $\left\{  \left(  x,y\right)  :x,y\geq0,x+y\leq
1\right\}  $, then%
\[
\int_{\mathcal{B}_{3}}f\left(  b_{1},b_{2}\right)  db=\int_{0}^{1}db_{1}%
\int_{0}^{1-b_{1}}f\left(  b_{1},b_{2}\right)  \left(  b_{1}b_{2}+\left(
b_{1}+b_{2}\right)  \left(  1-b_{1}-b_{2}\right)  \right)  db_{2}.
\]

\end{proposition}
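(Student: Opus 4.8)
The plan is to evaluate the four-dimensional integral by Fubini's theorem, integrating out $b_3$ and $b_4$ first and reducing everything to the area of a planar slice. In the parametrization (\ref{Bbbb}) a point $\vec b=(b_1,b_2,b_3,b_4)\in\mathbb{R}^4$ belongs to $\mathcal{B}_3$ precisely when all nine entries of $B(\vec b)$ are nonnegative. Three of these inequalities, namely $b_1\ge0$, $b_2\ge0$ and $b_1+b_2\le1$, cut out exactly the triangle $\{(x,y):x,y\ge0,\ x+y\le1\}$ on which $f$ is defined; the remaining six involve $b_3$ and $b_4$, and for fixed $(b_1,b_2)$ in that triangle they confine $(b_3,b_4)$ to the planar region
\[
R(b_1,b_2)=\bigl\{(b_3,b_4):b_3,b_4\ge0,\ b_3+b_4\le1,\ b_3\le1-b_1,\ b_4\le1-b_2,\ b_3+b_4\ge1-b_1-b_2\bigr\}.
\]
Since $f$ does not depend on $(b_3,b_4)$, the inner integral equals $f(b_1,b_2)$ times the area $|R(b_1,b_2)|$, so the proposition reduces to the identity $|R(b_1,b_2)|=b_1b_2+(b_1+b_2)(1-b_1-b_2)$.

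To compute this area I would start from the axis-parallel rectangle $[0,1-b_1]\times[0,1-b_2]$, of area $(1-b_1)(1-b_2)$, and note that $R(b_1,b_2)$ is obtained from it by deleting two corner triangles: the one near the origin on which $b_3+b_4<1-b_1-b_2$, and the one at the far corner $(1-b_1,1-b_2)$ on which $b_3+b_4>1$. Throughout the parameter triangle one has $0\le1-b_1-b_2\le\min(1-b_1,1-b_2)$, so each deleted piece is a right isosceles triangle lying inside the rectangle with both legs equal to $1-b_1-b_2$, hence of area $\tfrac12(1-b_1-b_2)^2$, and the two pieces have disjoint interiors. Therefore $|R(b_1,b_2)|=(1-b_1)(1-b_2)-(1-b_1-b_2)^2$, and expanding the products gives $(1-b_1)(1-b_2)-(1-b_1-b_2)^2=(b_1+b_2)+b_1b_2-(b_1+b_2)^2=b_1b_2+(b_1+b_2)(1-b_1-b_2)$, the claimed weight; substituting this back into the Fubini expression yields the formula of the proposition.

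The computation is elementary and I do not expect a real obstacle — the only thing to monitor is the degenerate behaviour of the slice $R(b_1,b_2)$ near the boundary of the parameter triangle, for instance that $R$ collapses to a line segment when $b_1=b_2=0$ and that one of the excised triangles shrinks to a point as $b_1+b_2\to1$. In each such case the closed form $(1-b_1)(1-b_2)-(1-b_1-b_2)^2$ still records the correct (possibly zero) area, so no separate case analysis is required. An equivalent route, sketched in the remark preceding the statement, instead splits the relevant square by two diagonal lines into four pieces and adds their contributions; the rectangle-minus-two-triangles computation above reaches the same area more directly.
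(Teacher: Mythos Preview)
Your argument is correct. Fixing $(b_1,b_2)$ in the base triangle and computing the area of the $(b_3,b_4)$--slice as the rectangle $[0,1-b_1]\times[0,1-b_2]$ with two congruent right isosceles corner triangles (legs $1-b_1-b_2$) removed is a clean and complete derivation; the algebraic identity $(1-b_1)(1-b_2)-(1-b_1-b_2)^2=b_1b_2+(b_1+b_2)(1-b_1-b_2)$ checks out, and your boundary remarks cover the degenerate slices.

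This is genuinely different from the route the paper indicates. The paper does not give a written-out proof but signals that one should pass to the $(b_1,s,t,x)$ parametrization introduced just above, where $b_2=(1-b_1)s$, $b_3=(1-b_1)t$ and $b_4=(1-s)(1-t)+b_1st+x$; the range of $x$ is then $[-\min(\ell_1,\ell_2),\min(u_1,u_2)]$, and which of $\ell_1,\ell_2$ (resp.\ $u_1,u_2$) is the active bound depends on the sign of $s-t$ (resp.\ $s+t-1$). Hence the unit $(s,t)$ square splits into four pieces along $s=t$ and $s+t=1$, and one sums the $x$--integrals over these pieces before undoing the substitution $b_2=(1-b_1)s$. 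That approach recycles machinery already set up for other purposes in the appendix, at the cost of a four-case sum and a Jacobian. Your direct Fubini-plus-planar-geometry argument avoids the change of variables and the case split entirely, and is the more economical proof of this particular proposition.
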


Observe that the integral kernel is an elementary symmetric function of
$\left(  b_{1},b_{2},1-b_{1}-b_{2}\right)  $.

\begin{corollary}
Let $f$ be integrable on $\left[  0,1\right]  $ then%
\[
\int_{\mathcal{B}_{3}}f\left(  b_{1}\right)  db=\frac{1}{6}\int_{0}%
^{1}f\left(  b_{1}\right)  \left(  1+5b_{1}\right)  \left(  1-b_{1}\right)
^{2}db_{1}.
\]
\label{col:int}
\end{corollary}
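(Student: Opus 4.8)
The plan is to obtain the Corollary directly from the Proposition by specializing to a function that is independent of the second variable. First I would apply the Proposition to $f(b_1,b_2):=f(b_1)$; this is legitimate, since a function $f$ integrable on $[0,1]$ pulls back to an integrable function on the triangle $\{(x,y):x,y\geq 0,\ x+y\leq 1\}$. Because the integrand does not depend on $b_2$, it can be pulled out of the inner integral, leaving
\[
\int_{\mathcal{B}_3}f(b_1)\,db=\int_0^1 f(b_1)\left(\int_0^{1-b_1}\bigl(b_1b_2+(b_1+b_2)(1-b_1-b_2)\bigr)\,db_2\right)db_1 .
\]

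The remaining task is to evaluate the inner integral. Writing $c:=1-b_1$ and expanding $(b_1+b_2)(c-b_2)=b_1c-b_1b_2+b_2c-b_2^{2}$, the two $b_1b_2$ monomials cancel, so the kernel simplifies to $b_1c+b_2c-b_2^{2}$; integrating over $b_2\in[0,c]$ yields $b_1c^{2}+\tfrac12 c^{3}-\tfrac13 c^{3}=b_1c^{2}+\tfrac16 c^{3}$. Factoring out $c^{2}=(1-b_1)^{2}$ gives $(1-b_1)^{2}\bigl(b_1+\tfrac16(1-b_1)\bigr)=\tfrac16(1+5b_1)(1-b_1)^{2}$, which is precisely the kernel in the statement, and the Corollary follows.

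There is essentially no obstacle here: the entire content is the short polynomial integration above, whose only notable feature is the cancellation of the $b_1b_2$ terms. As a sanity check, taking $f\equiv 1$ makes the right-hand side equal to $\tfrac16\int_0^1(1+3b_1-9b_1^{2}+5b_1^{3})\,db_1=\tfrac16\cdot\tfrac34=\tfrac18$, which reproduces $\mathrm{vol}_4(\mathcal{B}_3)=\tfrac98$ after multiplying by $9$ to pass from $\mathbb{R}^{4}$ to $\mathbb{R}^{9}$, consistent with the value stated in Section~2.
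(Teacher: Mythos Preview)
Your proof is correct and follows exactly the route intended in the paper: specialize the Proposition to a function depending only on $b_{1}$ and carry out the elementary $b_{2}$-integration. The paper does not spell out the inner integral, but your computation (including the cancellation of the $b_{1}b_{2}$ terms and the sanity check $\int_{\mathcal{B}_{3}}1\,db=\tfrac18$) fills in precisely the omitted details.
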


Thus $\int_{\mathcal{B}_{3}}1db=\frac{1}{8}$.

When considering $\mathcal{B}_{3}$ as a subset of $\mathbb{R}^{9}$ the element
of volume $db=\prod_{i=1}^{4}db_{i}$ is multiplied by $9$. The map
$B:\mathbb{R}^{4}\rightarrow\mathbb{R}^{9}$, defined in equation \ref{Bbbb} is
affine onto a 4-dimensional linear manifold (translate of a subspace) and its
Jacobian equals $9$, calculated as the square root of the determinant of the
Gram matrix of the images of the unit vectors relative to $B\left(
\overrightarrow{0}\right)  $. For example
\[
b_{1}%
\begin{bmatrix}
1 & 0\\
0 & 0
\end{bmatrix}
\rightarrow%
\begin{bmatrix}
0 & 0 & 1\\
0 & 0 & 1\\
1 & 1 & -1
\end{bmatrix}
+b_{1}%
\begin{bmatrix}
1 & 0 & -1\\
0 & 0 & 0\\
-1 & 0 & 1
\end{bmatrix}
;
\]
the Gram matrix is $%
\begin{bmatrix}
4 & 2 & 2 & 1\\
2 & 4 & 1 & 2\\
2 & 1 & 4 & 2\\
1 & 2 & 2 & 4
\end{bmatrix}
$, and its determinant equals $81$.

In this way we may obtain the average entropy
and directly derive expression (\ref{renbist}) for the generalized entropy.
Furthermore, $8\int_{\mathcal{B}_{3}}Q\left(  b\right)  db=\frac{1}{168}%
,8\int_{\mathcal{B}_{3}}Q\left(  b\right)  ^{2}db=\frac{1}{5940}$, thus the
standard deviation reads $\sigma_{Q}=0.01153\ldots$.

Since the set $\mathcal{U}_{3}$ of unistochastic matrices is the subset of
$\mathcal{B}_{3}$ for which $Q\geq0$, equivalently%
\[
\left\vert x\right\vert \leq R:=\left\{  4b_{1}st\left(  1-s\right)  \left(
1-t\right)  \right\}  ^{1/2},
\]
we see that%
\begin{align*}
\int_{\mathcal{U}_{3}}f\left(  b_{1}\right)  db  &  =\int_{0}^{1}f\left(
b_{1}\right)  \left(  1-b_{1}\right)  ^{2}db_{1}\int_{0}^{1}ds\int_{0}%
^{1}dt\int_{-R}^{R}dx\\
&  =\frac{\pi^{2}}{16}\int_{0}^{1}f\left(  b_{1}\right)  b_{1}^{1/2}\left(
1-b_{1}\right)  ^{2}db_{1},
\end{align*}
and $\int_{\mathcal{U}_{3}}1db=\pi^{2}/105$ in agreement with eq.
(\ref{h32}).  Observe that $R\leq\min\left(  l_{1},l_{2},u_{1},u_{2}\right)  $
by the (well-known) inequality $2\sqrt{xy}\leq x+y$ for $x,y\geq0$; this is
the reason that the integral extends over $0\leq s,t\leq1$.

\section{Jarlskog invariant as a product of three random variables}

\label{sec:jarprob}

In this appendix we show that the product of three random variables
$X_{0}X_{1}X_{2}$ introduced in (\ref{xx0} -- \ref{xx2}) has the same
probability distribution as the rescaled squared Jarlskog invariant
$X:=27Q=108 J^{2}$ of random unistochastic matrices generated with respect to
the measure $\mu_{k}$ defined in (\ref{mukint}). We shall start by quoting the
lemma on probability distribution of a product of two independent random variables

\begin{lemma}
\label{pdfXY} Suppose $Y_{1},Y_{2}$ are random variables on $\left[
0,1\right]  $ with densities $g_{i}$ and c.d.f.'s $G_{i},i=1,2$ (that is,
$G_{i}\left(  x\right)  =\int_{0}^{x}g_{i}\left(  t\right)  dt=P\left\{
Y_{i}\leq x\right\}  $, $0\leq x\leq1$). Then the density for $Y_{1}Y_{2}$ is
$\int_{x}^{1}g_{1}\left(  t\right)  g_{2}\left(  \frac{x}{t}\right)  \frac
{1}{t}dt$.
\end{lemma}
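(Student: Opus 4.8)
The plan is to compute the cumulative distribution function of the product $Z := Y_1 Y_2$ directly and then differentiate. Since $Y_1$ and $Y_2$ are independent and both supported on $[0,1]$, their product $Z$ also takes values in $[0,1]$. For a fixed $x \in [0,1]$, the event $\{Z \le x\}$ decomposes according to the value of $Y_1 = t$: if $t \le x$ then $Y_1 Y_2 \le t \le x$ automatically (because $Y_2 \le 1$), whereas if $t > x$ the constraint becomes $Y_2 \le x/t$, which has probability $G_2(x/t)$. Writing this out by conditioning on $Y_1$,
\[
P\{Z \le x\} = \int_0^x g_1(t)\,dt + \int_x^1 g_1(t)\,G_2\!\left(\frac{x}{t}\right)dt
= G_1(x) + \int_x^1 g_1(t)\,G_2\!\left(\frac{x}{t}\right)dt .
\]

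Next I would differentiate this expression with respect to $x$ to obtain the density. The first term contributes $g_1(x)$. For the integral term, I would use the Leibniz rule: differentiating the lower limit $x$ contributes $-g_1(x)\,G_2(x/x) = -g_1(x)\,G_2(1) = -g_1(x)$ (using that $G_2$ is a c.d.f. on $[0,1]$, so $G_2(1) = 1$), which cancels the $g_1(x)$ from the first term; differentiating under the integral sign contributes $\int_x^1 g_1(t)\, g_2(x/t)\,\tfrac{1}{t}\,dt$, since $\frac{\partial}{\partial x} G_2(x/t) = g_2(x/t)/t$. Collecting terms yields the claimed density $\int_x^1 g_1(t)\, g_2\!\left(\frac{x}{t}\right)\frac{1}{t}\,dt$.

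The routine but slightly delicate point is justifying differentiation under the integral sign and the boundary-term computation when $g_1$ or $g_2$ is unbounded near the endpoints (which does happen for the densities $f_1, f_2$ in the application, with their $(1-x)^{-2/3}$ and $(1-x)^{-1/3}$ factors). I would handle this by noting the densities are integrable, so the c.d.f.'s $G_i$ are continuous on $[0,1]$, and the map $(x,t) \mapsto g_1(t) G_2(x/t)$ is dominated appropriately; alternatively one can avoid pointwise differentiation entirely by verifying that for every test interval $[0,x]$ the proposed density integrates to $P\{Z \le x\}$ via Fubini, changing variables $u = tv$ in $\int_0^x \int_u^1 g_1(t) g_2(u/t) t^{-1}\,dt\,du$. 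The main obstacle, such as it is, is simply being careful with the interchange of integration order / differentiation in the presence of the integrable singularities; everything else is a direct conditioning argument.
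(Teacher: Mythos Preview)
Your argument is correct and is precisely the ``direct integration'' the paper alludes to; the paper does not spell out a proof of this lemma beyond that phrase, so your conditioning-then-differentiating computation (with the Fubini alternative you note) is exactly in line with what the authors have in mind.
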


Let us apply this lemma, which can be proved by direct integration, to a
random variable $Y_{2}=X_{0}$ distributed as in (\ref{xx0}).

\begin{corollary}
\label{kpdf} If $g_{2}\left(  t\right)  =\left(  k-\frac{1}{2}\right)
t^{k-\frac{3}{2}}$ with $k>\frac{1}{2}$ then the density of $Y_{1}Y_{2}$ is%
\[
\left(  k-\frac{1}{2}\right)  x^{k-\frac{3}{2}}\int_{x}^{1}g_{1}\left(
t\right)  t^{1/2-k}dt,0<x<1.
\]

\end{corollary}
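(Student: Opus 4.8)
The plan is to derive the density of $Y_1 Y_2$ directly from Lemma \ref{pdfXY} by substituting the explicit form $g_2(t) = (k-\tfrac12)t^{k-3/2}$ and then simplifying. Writing $Z := Y_1 Y_2$, the lemma gives the density of $Z$ at a point $x \in (0,1)$ as
\[
\int_x^1 g_1(t)\, g_2\!\left(\frac{x}{t}\right)\frac{1}{t}\,dt .
\]
The first step is to evaluate $g_2(x/t)$ for $t$ in the range of integration: since $0 < x < t \le 1$ we have $0 < x/t < 1$, so $g_2$ is being evaluated on its domain and $g_2(x/t) = (k-\tfrac12)(x/t)^{k-3/2}$.

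Next I would pull the constant $(k-\tfrac12)$ and the factor $x^{k-3/2}$ out of the integral, since neither depends on $t$. What remains inside is $g_1(t)\, t^{-(k-3/2)} \cdot t^{-1} = g_1(t)\, t^{-(k-1/2)} = g_1(t)\, t^{1/2-k}$. Combining, the density of $Z$ at $x$ is
\[
\left(k-\tfrac12\right) x^{k-3/2} \int_x^1 g_1(t)\, t^{1/2-k}\, dt, \qquad 0 < x < 1,
\]
which is exactly the claimed expression. So the proof is essentially a one-line substitution plus bookkeeping of exponents.

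There is no real obstacle here; the only point requiring a word of care is the domain check mentioned above — that $x/t$ stays in $[0,1]$ so that the formula for $g_2$ applies — and noting that the hypothesis $k > \tfrac12$ is what makes $g_2$ a legitimate probability density (its integral over $[0,1]$ equals $1$) and keeps the exponent $k-3/2 > -1$ so the resulting density is integrable near $0$. Since $Y_1$ has density $g_1$ supported on $[0,1]$ and $Y_2 = X_0$ has density $g_2$ also supported on $[0,1]$, their product is supported on $[0,1]$, consistent with the stated range $0 < x < 1$. I would present this as an immediate application of Lemma \ref{pdfXY}, with just the exponent arithmetic $t^{-(k-3/2)} \cdot t^{-1} = t^{1/2-k}$ made explicit.
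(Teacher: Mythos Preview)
Your argument is correct and is exactly the intended one: the paper states this as an immediate corollary of Lemma~\ref{pdfXY} without writing out a proof, and your substitution $g_2(x/t)=(k-\tfrac12)(x/t)^{k-3/2}$ followed by the exponent arithmetic $t^{-(k-3/2)}\cdot t^{-1}=t^{1/2-k}$ is precisely what that direct application amounts to.
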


Making use of the normalization constant $c_{k}$ introduced in (\ref{ck}) we
can write down explicit form for the density of the product $X_{1}X_{2}$.

\begin{proposition}
The density $f_{12}$ of $X_{1}X_{2}$ is $c_{k}x^{k-1}~_{2}F_{1}\left(
\frac{1}{3},\frac{2}{3};1;1-x\right)  ,0<x\leq1$.
\end{proposition}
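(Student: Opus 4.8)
The plan is to apply Lemma~\ref{pdfXY} (in the form of Corollary~\ref{kpdf}) to the two random variables $X_{1}$ and $X_{2}$ with densities $f_{1}$ and $f_{2}$ as in (\ref{xx1}) and (\ref{xx2}). Since both densities have the common factor $x^{k-1}$, I would first compute the convolution-type integral
\[
f_{12}(x)=\int_{x}^{1} f_{1}(t)\, f_{2}\!\left(\tfrac{x}{t}\right)\frac{dt}{t}
=\frac{\Gamma\left(k+\frac13\right)\Gamma\left(k+\frac23\right)}{\Gamma(k)^{2}\,\Gamma\left(\frac13\right)\Gamma\left(\frac23\right)}\int_{x}^{1} t^{k-1}(1-t)^{-2/3}\left(\frac{x}{t}\right)^{k-1}\!\left(1-\frac{x}{t}\right)^{-1/3}\frac{dt}{t}.
\]
Pulling out $x^{k-1}$ and simplifying the powers of $t$, the integral becomes $x^{k-1}\int_{x}^{1} t^{-1}(1-t)^{-2/3}\bigl(1-x/t\bigr)^{-1/3}\,dt$ (up to the gamma prefactor). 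The substitution $t = x + (1-x)u$, or equivalently $t\mapsto$ a variable that maps $[x,1]$ to $[0,1]$, should turn this into a standard Euler-type integral representation of ${}_2F_1$.

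The key identity I would invoke is Euler's integral representation
\[
{}_2F_1(a,b;c;z)=\frac{\Gamma(c)}{\Gamma(b)\Gamma(c-b)}\int_{0}^{1} u^{b-1}(1-u)^{c-b-1}(1-zu)^{-a}\,du,
\]
valid for $\operatorname{Re} c>\operatorname{Re} b>0$. After the change of variables the integrand must match $u^{b-1}(1-u)^{c-b-1}(1-zu)^{-a}$ with $\{a,b\}=\{\tfrac13,\tfrac23\}$, $c=1$, and $z=1-x$; the two choices of which exponent plays the role of $a$ versus $b$ are interchangeable by the symmetry ${}_2F_1(a,b;c;z)={}_2F_1(b,a;c;z)$, which is why the final answer is symmetric in $\tfrac13,\tfrac23$. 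The gamma factors $\Gamma(c)/(\Gamma(b)\Gamma(c-b))$ produced by the representation, namely $\Gamma(1)/(\Gamma(\tfrac13)\Gamma(\tfrac23))$, should exactly cancel the $1/(\Gamma(\tfrac13)\Gamma(\tfrac23))$ sitting in the prefactor, leaving the constant $c_{k}=\Gamma\left(k+\tfrac13\right)\Gamma\left(k+\tfrac23\right)/\Gamma(k)^{2}$ from (\ref{ck}).

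The main obstacle I anticipate is purely bookkeeping: getting the substitution right so that the limits, the Jacobian, and the three exponents line up cleanly with Euler's formula, and confirming that the leftover gamma factors collapse to precisely $c_{k}$ without a stray constant. As a sanity check I would verify the total mass: $\int_{0}^{1} f_{12}(x)\,dx$ should equal $1$, which amounts to a beta integral against ${}_2F_1(\tfrac13,\tfrac23;1;1-x)$ and can be cross-checked against the moment formula, since the $n$-th moment of $X_{1}X_{2}$ must equal $(\alpha)_n(\beta)_n/\bigl((\alpha+\gamma)_n(\beta+\delta)_n\bigr)$-type ratios already implicit in (\ref{momq}). A further consistency check is the endpoint behavior: as $x\to 1$ the hypergeometric factor tends to ${}_2F_1(\tfrac13,\tfrac23;1;0)=1$, so $f_{12}(1)=c_{k}$, which should agree with evaluating the defining integral at $x=1$ (an empty-interval limit handled by the $(1-t)$-singularities), and near $x=0$ the logarithmic divergence of ${}_2F_1(\tfrac13,\tfrac23;1;z)$ as $z\to 1$ reproduces the expected mild singularity of the density.
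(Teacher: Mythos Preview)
Your approach is correct and in fact cleaner than the paper's. Both proofs begin identically: write $f_{12}(x)$ via Lemma~\ref{pdfXY}, pull out $x^{k-1}$, and substitute $t=x+(1-x)s$ so that the $(1-x)$ powers cancel and the integral becomes $\int_0^1 s^{-2/3}(1-s)^{-1/3}(x+s(1-x))^{-1/3}\,ds$ (or the version with exponents $\tfrac13,\tfrac23$ swapped, depending on which of $X_1,X_2$ plays the role of $Y_1$). At this point the paper takes a detour: it expands $(x+s(1-x))^{-1/3}$ as a binomial series in $(x-1)/x$, valid only for $x>\tfrac12$, integrates term-by-term to produce a double sum, collapses the inner sum with Chu--Vandermonde, and finally extends to $0<x\le1$ by analytic continuation. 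Your route---writing $x+s(1-x)=1-(1-x)(1-s)$, setting $u=1-s$, and reading off Euler's integral for $_2F_1(\tfrac13,\tfrac23;1;1-x)$ directly---gets there in one step and is valid on the whole interval without any continuation argument. The gamma cancellation you anticipate is exactly as you describe: Euler's representation supplies $\Gamma(b)\Gamma(c-b)/\Gamma(c)=\Gamma(\tfrac13)\Gamma(\tfrac23)$, which kills the matching factor in the prefactor and leaves precisely $c_k$.
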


\begin{proof}
By Lemma \ref{pdfXY}, the density is%
\begin{align*}
f_{12}\left(  x\right)   &  =\frac{c_{k}}{\Gamma\left(  \frac{1}{3}\right)
\Gamma\left(  \frac{2}{3}\right)  }\int_{x}^{1}t^{k-1}\left(  1-t\right)
^{-1/3}\left(  \frac{x}{t}\right)  ^{k-1}\left(  1-\frac{x}{t}\right)
^{-2/3}\frac{dt}{t}\\
&  =\frac{c_{k}x^{k-1}}{\Gamma\left(  \frac{1}{3}\right)  \Gamma\left(
\frac{2}{3}\right)  }\int_{x}^{1}\left(  1-t\right)  ^{-1/3}\left(
t-x\right)  ^{-2/3}t^{-1/3}dt\\
&  =\frac{c_{k}x^{k-1}}{\Gamma\left(  \frac{1}{3}\right)  \Gamma\left(
\frac{2}{3}\right)  }\int_{0}^{1}\left(  1-s\right)  ^{-1/3}s^{-2/3}\left(
x+s\left(  1-x\right)  \right)  ^{-1/3}ds,
\end{align*}
using the substitution $t=x+s\left(  1-x\right)  $. Now restrict $x$ to
$\frac{1}{2}<x\leq1$ then $0\leq\frac{1-x}{x}<1$ and we can expand%
\begin{align*}
\left(  x+s\left(  1-x\right)  \right)  ^{-1/3}  &  =x^{-1/3}\left(
1-s\frac{x-1}{x}\right)  ^{-1/3}\\
&  =x^{-1/3}\sum_{n=0}^{\infty}\frac{\left(  \frac{1}{3}\right)  _{n}}%
{n!}s^{n}\left(  \frac{x-1}{x}\right)  ^{n}.
\end{align*}

Integrating term-by-term we obtain%
\begin{align*}
f_{12}\left(  x\right)   &  =\frac{c_{k}x^{k-1}}{\Gamma\left(  \frac{1}%
{3}\right)  \Gamma\left(  \frac{2}{3}\right)  }\sum_{n=0}^{\infty}%
\frac{\left(  \frac{1}{3}\right)  _{n}\Gamma\left(  \frac{1}{3}+n\right)
\Gamma\left(  \frac{2}{3}\right)  }{n!\Gamma\left(  n+1\right)  }\left(
x-1\right)  ^{n}x^{-n-1/3}\\
&  =c_{k}x^{k-1}\sum_{n=0}^{\infty}\frac{\left(  \frac{1}{3}\right)
_{n}\left(  \frac{1}{3}\right)  _{n}}{n!n!}\left(  -1\right)  ^{n}\sum
_{j=0}^{\infty}\frac{\left(  \frac{1}{3}+n\right)  _{j}}{j!}\left(
1-x\right)  ^{n+j}\\
&  =c_{k}x^{k-1}\sum_{m=0}^{\infty}\frac{\left(  \frac{1}{3}\right)  _{m}}%
{m!}\left(  1-x\right)  ^{m}\sum_{n=0}^{m}\frac{\left(  \frac{1}{3}\right)
_{n}m!\left(  -1\right)  ^{n}}{n!\left(  1\right)  _{n}\left(  m-n\right)  !},
\end{align*}
where the summation variables are changed to $n$ and $m=n+j$. The inner sum is
evaluated with the Chu-Vandermonde sum%

\begin{align*}
\sum_{n=0}^{m}\frac{\left(  \frac{1}{3}\right)  _{n}m!\left(  -1\right)  ^{n}%
}{n!\left(  1\right)  _{n}\left(  m-n\right)  !}  &  =\sum_{n=0}^{m}%
\frac{\left(  \frac{1}{3}\right)  _{n}\left(  -m\right)  _{n}}{n!\left(
1\right)  _{n}}\\
&  =\frac{\left(  1-\frac{1}{3}\right)  _{m}}{\left(  1\right)  _{m}}%
=\frac{\left(  \frac{2}{3}\right)  _{m}}{\left(  1\right)  _{m}}.
\end{align*}

This shows that $f_{12}\left(  x\right)  =c_{k}x^{k-1}~_{2}F_{1}\left(
\frac{1}{3},\frac{2}{3};1;1-x\right)  $ at least for $\frac{1}{2}<x\leq1$, but
both sides are analytic on $0<x<2$ so the equality holds for $0<x\leq1$.
\end{proof}

To derive an expression for the density of the triple product $X=X_{0}%
X_{1}X_{2}$ we need to combine lemma \ref{pdfXY} with corollary \ref{kpdf}.
Hence we can write  $f\left(  x\right)  =\left(  k-\frac{1}{2}\right)
x^{k-3/2}\int_{x}^{1}f_{12}\left(  t\right)  t^{1/2-k}dt$ and $f_{12}\left(
t\right)  =c_{k}t^{k-1}~_{2}F_{1}\left(  \frac{1}{3},\frac{2}{3};1;1-t\right)
$ since $t^{k-1}t^{1/2-k}=t^{-1/2}$. This completes the proof of formula
(\ref{int2F1}) for the distribution of the rescaled squared Jarlskog invariant
$X:=27Q=108 J^{2}$.

Following Gibbons et al \cite{GGPT09} we shall now concentrate on the
probability distribution for the absolute value of the Jarlskog invariant,
$\left\vert J\right\vert $, equal to $\left(  X/108\right)  ^{1/2}$. Let
$f_{0}\left(  x\right)  $ denote the density function of $X^{1/2}$, thus
$f_{0}\left(  x\right)  =2xf\left(  x^{2}\right)  $. It is not hard to compute
a series for $f_{0}\left(  x\right)  $ when $x$ is near $1$. We change the
variable of integration $t=\left(  1-s\right)  ^{2}$ and obtain%
\[
f_{0}\left(  x\right)  =4c_{k}\left(  k-\frac{1}{2}\right)  x^{2k-2}\int
_{0}^{1-x}~_{2}F_{1}\left(  \frac{1}{3},\frac{2}{3};1;s\left(  2-s\right)
\right)  ds.
\]

We expand $\left(  s\left(  2-s\right)  \right)  $ $^{j}$ (for $j=1,2,\ldots
$), collect the coefficients of $s^{m}$ and integrate term-by-term to get
\[
\int_{0}^{1-x}~_{2}F_{1}\left(  \frac{1}{3},\frac{2}{3};1;s\left(  2-s\right)
\right)  ds=\sum_{m=0}^{\infty}\frac{\left(  1-x\right)  ^{m+1}}{m+1}%
\sum_{j=0}^{\left\lfloor m/2\right\rfloor }\frac{\left(  -1\right)
^{j}\left(  \frac{1}{3}\right)  _{m-j}\left(  \frac{2}{3}\right)
_{m-j}2^{m-2j}}{\left(  m-j\right)  !\left(  m-2j\right)  !j!};
\]
there is no nice formula for the inner ($j$-) sum. Thus
\[
f_{0}\left(  x\right)  =4c_{k}\left(  k-\frac{1}{2}\right)  x^{2k-2}\left(
1-x\right)  \left\{  1+\frac{2}{9}\left(  1-x\right)  +\frac{22}{243}\left(
1-x\right)  ^{2}+\frac{310}{6561}\left(  1-x\right)  ^{3}+\cdots\right\}  ,
\]
for $x$ near $1$ (that is, not too close to zero). When $k=1$ we have
$c_{1}=\Gamma\left(  \frac{4}{3}\right)  \Gamma\left(  \frac{5}{3}\right)
=\frac{4\pi\sqrt{3}}{27}$ and%
\[
f_{0}\left(  x\right)  =\frac{8\pi\sqrt{3}}{27}\left(  1-x\right)  \left(
1+\frac{2}{9}\left(  1-x\right)  +\frac{22}{243}\left(  1-x\right)  ^{2}%
+\frac{310}{6561}\left(  1-x\right)  ^{3}+\cdots\right)  .
\]

\begin{lemma}
$\int_{0}^{1}~_{2}F_{1}\left(  \frac{1}{3},\frac{2}{3};1;1-t\right)
t^{-1/2}dt=3$.
\end{lemma}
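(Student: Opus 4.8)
The plan is to reduce the integral to a Gauss hypergeometric series evaluated at $1$ and then apply the Gauss summation theorem. First I would substitute $u=1-t$, which turns the claim into
\[
\int_{0}^{1}{}_2F_{1}\!\left(\tfrac{1}{3},\tfrac{2}{3};1;u\right)(1-u)^{-1/2}\,du=3 .
\]
Next I would expand the hypergeometric function as a power series in $u$, namely ${}_2F_{1}(\tfrac13,\tfrac23;1;u)=\sum_{n\ge0}\frac{(\frac13)_{n}(\frac23)_{n}}{(1)_{n}\,n!}u^{n}$, and integrate term by term. Since all coefficients are positive on $[0,1)$, term-by-term integration is justified by monotone convergence (the resulting sum is then seen to be finite, so no subtlety arises from the logarithmic blow-up of the integrand at $u=1$).

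Each term produces a Beta integral $\int_{0}^{1}u^{n}(1-u)^{-1/2}\,du=B(n+1,\tfrac12)=\frac{n!\,\Gamma(\frac12)}{\Gamma(n+\frac32)}=\frac{2\,n!}{(\frac32)_{n}}$, using $\Gamma(n+\tfrac32)=(\tfrac32)_{n}\Gamma(\tfrac32)$ and $\Gamma(\tfrac32)=\tfrac12\Gamma(\tfrac12)$. Substituting this back and cancelling the $n!$ factors collapses the sum to
\[
\sum_{n\ge0}\frac{(\frac13)_{n}(\frac23)_{n}}{(1)_{n}\,n!}\cdot\frac{2\,n!}{(\frac32)_{n}}
=2\sum_{n\ge0}\frac{(\frac13)_{n}(\frac23)_{n}}{(\frac32)_{n}\,n!}
=2\,{}_2F_{1}\!\left(\tfrac{1}{3},\tfrac{2}{3};\tfrac{3}{2};1\right).
\]

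Finally I would invoke the Gauss summation theorem ${}_2F_{1}(a,b;c;1)=\frac{\Gamma(c)\Gamma(c-a-b)}{\Gamma(c-a)\Gamma(c-b)}$, valid here since $c-a-b=\tfrac12>0$, to get $2\cdot\frac{\Gamma(\frac32)\Gamma(\frac12)}{\Gamma(\frac76)\Gamma(\frac56)}$. The numerator is $\Gamma(\tfrac32)\Gamma(\tfrac12)=\tfrac{\pi}{2}$, and the denominator is evaluated via $\Gamma(\tfrac76)=\tfrac16\Gamma(\tfrac16)$ together with the reflection formula $\Gamma(\tfrac16)\Gamma(\tfrac56)=\pi/\sin(\pi/6)=2\pi$, giving $\Gamma(\tfrac76)\Gamma(\tfrac56)=\tfrac{\pi}{3}$. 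Hence the value is $2\cdot\frac{\pi/2}{\pi/3}=3$, as claimed. I do not anticipate a serious obstacle here; the only point that needs a word of care is the legitimacy of interchanging sum and integral near $u=1$, which monotone convergence handles cleanly.
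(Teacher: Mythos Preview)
Your proof is correct and follows essentially the same route as the paper: expand ${}_2F_{1}(\tfrac13,\tfrac23;1;1-t)$ in powers of $1-t$, integrate term by term against $t^{-1/2}$ via the Beta integral to obtain $2\,{}_2F_{1}(\tfrac13,\tfrac23;\tfrac32;1)$, and then evaluate with the Gauss sum and the reflection formula. The only cosmetic difference is your preliminary substitution $u=1-t$ and your explicit appeal to monotone convergence to justify the interchange of sum and integral, which the paper leaves implicit.
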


\begin{proof}
Indeed%
\begin{gather*}
\int_{0}^{1}~_{2}F_{1}\left(  \frac{1}{3},\frac{2}{3};1;1-t\right)
t^{-1/2}dt=\sum_{n=0}^{\infty}\frac{\left(  \frac{1}{3}\right)  _{n}\left(
\frac{2}{3}\right)  _{n}}{n!n!}\int_{0}^{1}\left(  1-t\right)  ^{n}%
t^{-1/2}dt\\
=\sum_{n=0}^{\infty}\frac{\left(  \frac{1}{3}\right)  _{n}\left(  \frac{2}%
{3}\right)  _{n}}{n!n!}\frac{\Gamma\left(  n+1\right)  \Gamma\left(  \frac
{1}{2}\right)  }{\Gamma\left(  n+\frac{3}{2}\right)  }=\frac{\Gamma\left(
\frac{1}{2}\right)  }{\Gamma\left(  \frac{3}{2}\right)  }\sum_{n=0}^{\infty
}\frac{\left(  \frac{1}{3}\right)  _{n}\left(  \frac{2}{3}\right)  _{n}%
}{n!\left(  \frac{3}{2}\right)  _{n}}\\
=\frac{\Gamma\left(  \frac{1}{2}\right)  \Gamma\left(  \frac{3}{2}\right)
\Gamma\left(  \frac{1}{2}\right)  }{\Gamma\left(  \frac{3}{2}\right)
\Gamma\left(  \frac{7}{6}\right)  \Gamma\left(  \frac{5}{6}\right)  }%
=\frac{\pi}{\frac{1}{6}\Gamma\left(  \frac{1}{6}\right)  \Gamma\left(
\frac{5}{6}\right)  }=6\sin\frac{\pi}{6}=3.
\end{gather*}

We used the Gauss sum $_{2}F_{1}\left(  a,b;c;1\right)  =\frac{\Gamma\left(
c\right)  \Gamma\left(  c-a-b\right)  }{\Gamma\left(  c-a\right)
\Gamma\left(  c-b\right)  }$ for $c>a+b$, and the equation $\Gamma\left(
t\right)  \Gamma\left(  1-t\right)  =\frac{\pi}{\sin\pi t}$.
\end{proof}

Thus $f\left(  x\right)  =c_{k}\left(  k-\frac{1}{2}\right)  x^{k-3/2}\left(
3-\int_{0}^{x}~_{2}F_{1}\left(  \frac{1}{3},\frac{2}{3};1;1-t\right)
t^{-1/2}dt\right)  $. To analyze the behavior for $x$ near zero we use the
classical formulas for the hypergeometric series $_{2}F_{1}\left(
a,b;c;t\right)  $ at the singular point $t=1$. The special case $c=a+b$ is
more complicated (see \cite[p. 257]{Le}):%

\begin{align*}
_{2}F_{1}\left(  \frac{1}{3},\frac{2}{3};1;1-t\right)   &  =\frac
{\Gamma\left(  1\right)  }{\Gamma\left(  \frac{1}{3}\right)  \Gamma\left(
\frac{2}{3}\right)  }\sum_{n=0}^{\infty}\frac{\left(  \frac{1}{3}\right)
_{n}\left(  \frac{2}{3}\right)  _{n}}{n!n!}\left(  A_{n}-\ln t\right)
t^{n},\\
A_{n}  &  :=2\psi\left(  n+1\right)  -\psi\left(  n+\frac{1}{3}\right)
-\psi\left(  n+\frac{2}{3}\right)  .
\end{align*}

By the triplication formula for the $\psi$-function (recall $\psi\left(
t\right)  =\frac{d}{dt}\Gamma\left(  t\right)  /\Gamma\left(  t\right)  $),%
\begin{align*}
\psi\left(  n+\frac{1}{3}\right)  +\psi\left(  n+\frac{2}{3}\right)   &
=3\psi\left(  3n\right)  -\psi\left(  n\right)  -3\ln3\\
&  =3\psi\left(  3n+1\right)  -\psi\left(  n+1\right)  -3\ln3,
\end{align*}
because $\psi\left(  t+1\right)  =\psi\left(  t\right)  =\frac{1}{t}$ for
$t>0$; the latter formula is valid for $n\geq0$. Thus%
\begin{align*}
A_{n}  &  =3\left(  \psi\left(  n+1\right)  -\psi\left(  3n+1\right)  \right)
+3\ln3\\
&  =-\sum_{j=n+1}^{3n}\frac{3}{j}+3\ln3.
\end{align*}

Also%
\[
\int_{0}^{x}\left(  A_{n}-\ln t\right)  t^{n-1/2}dt=\frac{2x^{n+1/2}}%
{2n+1}\left(  -\ln x+A_{n}+\frac{2}{2n+1}\right)  .
\]
Thus%
\begin{align*}
f\left(  x\right)   &  =c_{k}\left(  k-\frac{1}{2}\right)  x^{k-3/2}\left\{
3-\frac{\sqrt{3}}{2\pi}\sum_{n=0}^{\infty}\frac{\left(  \frac{1}{3}\right)
_{n}\left(  \frac{2}{3}\right)  _{n}}{n!n!}\frac{2x^{n+1/2}}{2n+1}\left(  -\ln
x+A_{n}+\frac{2}{2n+1}\right)  \right\}  ,\\
f_{0}\left(  x\right)   &  =2c_{k}\left(  k-\frac{1}{2}\right)  x^{2k-2}%
\left\{  3-\frac{\sqrt{3}}{2\pi}\sum_{n=0}^{\infty}\frac{\left(  \frac{1}%
{3}\right)  _{n}\left(  \frac{2}{3}\right)  _{n}}{n!n!}\frac{2x^{2n+1}}%
{2n+1}\left(  -2\ln x+A_{n}+\frac{2}{2n+1}\right)  \right\}  ;
\end{align*}
and we have found the density function of $x=\sqrt{X}=6\sqrt{3}|J|$ exhibiting
the behavior for $x$ near zero.

In the expression for $f_{0}$ the first  few terms inside the braces \{\} are
\[
3-\frac{\sqrt{3}}{2\pi}\left[  -\ln\left(  \frac{x^{2}}{27}\right)  \left(
2x+\frac{4}{27}x^{3}+\frac{4}{81}x^{5}\right)  +4x-\frac{22}{81}x^{3}%
-\frac{49}{405}x^{5}\right]  .
\]
The cumulative distribution function $F_{0}\left(  x\right)  =P\left\{
\sqrt{X}<x\right\}  =P\left\{  |J| < x/6\sqrt{3} \right\} $ is
\begin{align*}
F_{0}\left(  x\right)   &  =2c_{k}\left(  k-\frac{1}{2}\right)  x^{2k-1}\\
&  \times\left\{  \frac{3}{2k-1}-\frac{\sqrt{3}}{2\pi}\sum_{n=0}^{\infty}%
\frac{\left(  \frac{1}{3}\right)  _{n}\left(  \frac{2}{3}\right)  _{n}%
x^{2n+1}}{n!n!\left(  2n+1\right)  \left(  n+k\right)  }\left(  -\ln
\frac{x^{2}}{27}-\sum_{j=n+1}^{3n}\frac{3}{j}+\frac{2}{2n+1}+\frac{1}%
{n+k}\right)  \right\}  .
\end{align*}

The important cases are:

\begin{enumerate}
\item $k=1$, Haar measure on $U\left(  3\right)  $,%
\[
F_{0}\left(  x\right)  =\frac{4\pi\sqrt{3}}{9}x-\frac{2}{9}\left(  -\ln\left(
\frac{x^{2}}{27}\right)  \left(  x^{2}+\frac{1}{27}x^{4}+\frac{2}{243}%
x^{6}+\cdots\right)  +3x^{2}-\frac{4}{81}x^{4}-\frac{127}{7290}x^{6}%
+\cdots\right)  ;
\]

\item $k=\frac{3}{2}$, flat measure,%
\[
F_{0}\left(  x\right)  =\frac{70}{27}x^{2}\left\{  \frac{3}{2}-\frac{\sqrt{3}%
}{2\pi}\left(  -\ln\left(  \frac{x^{2}}{27}\right)  \left(  \frac{2}{3}%
x+\frac{4}{135}x^{3}+\cdots\right)  +\frac{16}{9}x-\frac{86}{2025}x^{3}%
+\cdots\right)  .\right\}
\]

\end{enumerate}

Introducing a new variable, $y=x/6\sqrt{3}$, we arrive at expressions
(\ref{intmu1}) and (\ref{intmu3}) presented in Sec. \ref{sec:dist}.

\section{Conjectures of measures in higher dimensions}

The method used in Section \ref{sec:vol} relies on one of the authors'
\cite{D1} construction of a linear operator commuting with the action of
$\mathcal{S}_{3}$, mapping homogeneous polynomials in three variables to
homogeneous polynomials of the same degree, and depending on a parameter $k$
(a particular case of the \textquotedblleft Dunkl intertwining
operator\textquotedblright). This operator is realized as an integral over
$U\left(  3\right)  $. The case $k=1$ is based on a formula of Harish-Chandra
(see Helgason \cite[p.328]{He})%
\[
\int_{U\left(  N\right)  } \exp\left(  \mathrm{Tr} \left(  D\left(  x\right)
UD\left(  y\right)  U^{\ast}\right)  \right)  dm\left(  U\right)  =\frac
{c_{N}}{a\left(  x\right)  a\left(  y\right)  }\sum_{w\in\mathcal{S}_{N}}%
\det\left(  w\right)  \exp\left(  \left\langle xw,y\right\rangle \right)  ,
\]
where the symmetric group on $N$ objects is identified with the set of
permutation matrices in $O\left(  N\right)  $,  for $x,y\in\mathbb{R}^{N}$ the
inner product is $\left\langle x,y\right\rangle :=\sum_{j=1}^{N}x_{j}y_{j}$,
$D\left(  x\right)  $ is the diagonal matrix with $D\left(  x\right)
_{jj}=x_{j}$, $a\left(  x\right)  :=\prod_{1\leq i<j\leq N}\left(  x_{i}%
-x_{j}\right)  $, $dm$ is Haar measure and $c_{N}$ is a constant. The relation
to unistochastic matrices follows from the equation%
\[
\mathrm{Tr} \bigl(  D(x) U D(y) U^{\ast}\bigr)
=\sum_{i,j=1}^{N}x_{i}\left\vert U_{ij}\right\vert ^{2}y_{j}.
\]
The aforementioned linear operator is known algebraically, that is, with some
computational effort one can determine the action on any (low-degree!)
polynomial. For $N=3$ we were able to find a parametrized family of measures
to implement the operator, roughly%
\[
p\left(  x\right)  \mapsto\int_{U\left(  3\right)  }p\left(  xf\left(
U\right)  \right)  d\mu_{k}\left(  U\right)  ,
\]
where $f\left(  U\right)  _{ij}:=\left\vert U_{ij}\right\vert ^{2}$ as in
(\ref{unist}) for any polynomial $p$. It is known that
\[
\int_{U\left(  N\right)  }g\left(  \left\vert U_{ij}\right\vert ^{2}\right)
dm\left(  U\right)  =\left(  N-1\right)  \int_{0}^{1}g\left(  t\right)
\left(  1-t\right)  ^{N-2}dt,
\]
for any continuous function $g$ and any matrix entry $U_{ij}$. As in Section 5
we can compute the average (generalized) entropy for the entries $\left\vert
U_{ij}\right\vert ^{2}$ with respect to Haar measure ($q\neq1$):%
\begin{align*}
\left\langle S_{q}\right\rangle _{Haar}  &  =\frac{1}{N\left(  q-1\right)
}\int_{U\left(  N\right)  }\sum_{i,j=1}^{N}\left(  \left\vert U_{ij}%
\right\vert ^{2}-\left\vert U_{ij}\right\vert ^{2q}\right)  dm\left(  U\right)
\\
&  =\frac{N\left(  N-1\right)  }{q-1}\int_{0}^{1}\left(  t-t^{q}\right)
\left(  1-t\right)  ^{N-2}dt\\
&  =\frac{1}{q-1}\left(  1-\frac{N!}{\left(  q+1\right)  _{N-1}}\right) \\
&  =N!\sum_{i=0}^{N-2}\frac{\left(  -1\right)  ^{i}}{i!\left(  N-2-i\right)
!\left(  i+2\right)  \left(  q+i+1\right)  };
\end{align*}
the last equation is the partial fraction decomposition. Also $\lim
_{q\rightarrow1}\left\langle S_{q}\right\rangle =\psi\left(  N+1\right)
-\psi\left(  2\right)  =\sum_{j=2}^{N}\frac{1}{j}$, in agreement with the
known results for the mean entropy of random complex vectors distributed
according to the unitarily invariant measure \cite{Jo90,BZ06}.

We are thus tempted to speculate that there exists a measure $\mu_{k}$ on
$U\left(  N\right)  $ such that
\[
\int_{U\left(  N\right)  }g\left(  \left\vert U_{ij}\right\vert ^{2}\right)
d\mu_{k}\left(  U\right)  =\frac{\Gamma\left(  Nk\right)  }{\Gamma\left(
k\right)  \Gamma\left(  \left(  N-1\right)  k\right)  }\int_{0}^{1}g\left(
t\right)  t^{k-1}\left(  1-t\right)  ^{\left(  N-1\right)  k-1}dt,
\]
with $g,\left\vert U_{ij}\right\vert ^{2}$ as above. However we must emphasize
that there is an important difference between $\mathcal{U}_{3}$ and
$\mathcal{U}_{N}$ with $N\geq4$. There is a single inequality characterizing
$\mathcal{U}_{3}$ inside $\mathcal{B}_{3}$ (the condition is $Q\left(
b\right)  \geq0$) while $\left(  N-2\right)  ^{2}$ inequalities occur in
general \cite{Di06}. Another difference is that the elements of $U$ can not
necessarily be determined from the values $\left\{  \left\vert U_{ij}%
\right\vert ^{2}:1\leq i,j\leq N\right\}  $ (that is, up to left and right
multiplication by diagonal unitary matrices and permutation of rows or
columns). For instance for the flat matrix $W_{4}$ of van der Waerden, with
all entries equal to $1/4$ there exists a one parameter family of unitary
matrices $U(\alpha)$ (rescaled complex Hadamard matrices \cite{TZ06}) such
that $|[U(\alpha)]_{ij}|^{2}=[W_{4}]_{ij}=1/4$.

It would be interesting to be able to fit the \textquotedblleft
flat\textquotedblright\ measure on $\mathcal{U}_{N}$ (inherited from
$\mathcal{B}_{N}$) into the $\mu_{k}$ framework suggested above, but this
appears to be a sizable research problem in itself.

%%%%%%%%%%%%%

\medskip

\end{document}